\newtheorem{thm}{Theorem}
\newtheorem{lem}{Lemma}
\newtheorem{proof}{Proof}[section]
\newtheorem{rem}{Remark}
\newtheorem{conj}{Conjecture}
\newcommand{\ie}{{\em i.e., }}
\newcommand{\eg}{{\em e.g., }}
\newcommand{\Qm}{{\bf Q}}
\newcommand{\Am}{{\bf A}}
\newcommand{\fv}{{\bf f}}
\newcommand{\yv}{{\bf y}}
\newcommand{\wv}{{\bf w}}
\newcommand{\ztr}{{j}}
\begin{document}

\title{Optimal Virtual Network Service Placement/Distribution}
\title{Approximation Algorithms for the NFV Service Distribution Problem}

\author{
\IEEEauthorblockN{
Hao Feng\IEEEauthorrefmark{1}, Jaime Llorca\IEEEauthorrefmark{2}, Antonia M. Tulino\IEEEauthorrefmark{2}\IEEEauthorrefmark{3}, Danny Raz\IEEEauthorrefmark{2}, Andreas F. Molisch\IEEEauthorrefmark{1}
}
\IEEEauthorblockA{
\IEEEauthorrefmark{1}University of Southern California, CA, USA, Email: \{haofeng, molisch\}@usc.edu\\
\IEEEauthorrefmark{2}Nokia Bell Labs, NJ, USA, Email: \{jaime.llorca, a.tulino, danny.raz\}@nokia-bell-labs.com\\
\IEEEauthorrefmark{3}University of Naples Federico II, Italy, Email: \{antoniamaria.tulino\}@unina.it
}
}


\maketitle

\begin{abstract}
Distributed cloud networking builds on network functions virtualization (NFV) and software defined networking (SDN) to enable the
deployment of network services in the form of elastic virtual network functions (VNFs) instantiated over general purpose servers at distributed cloud locations. 
We address the design of fast approximation algorithms for the NFV service distribution problem (NSDP), whose goal is to determine the placement of VNFs, the routing of service flows, and the associated allocation of cloud and network resources that satisfy client demands with minimum cost.
We show that in the case of 
load-proportional costs, the resulting fractional NSDP can be formulated as a {\em multi-commodity-chain} flow problem on a cloud-augmented graph, and design a queue-length based algorithm, named QNSD, that provides an $O(\epsilon)$ approximation in time $O(1/\epsilon)$.
We then address the case 
in which resource costs are a function of the integer number of allocated resources and design a variation of QNSD 
that effectively pushes for flow consolidation into a limited number of active resources to minimize overall cloud network cost.
\end{abstract}

\pagenumbering{arabic}

\IEEEpeerreviewmaketitle


\section{Introduction}

Distributed cloud networking builds on network functions virtualization (NFV) and software defined networking (SDN) to enable the
deployment of network services in the form of elastic virtual network functions (VNFs) instantiated over general purpose servers at multiple cloud locations, and interconnected by a programmable network fabric that allows dynamically steering client flows \cite{nfv}-\cite{nfv_sdn}.
A \emph{cloud network} operator can then host a variety of services over a common physical infrastructure, reducing both capital and operational expenses. 
In order to make the most of this attractive scenario, a key challenge is to find the placement of VNFs and the routing of client flows through the appropriate VNFs that minimize the use of the physical infrastructure. 

The work in \cite{vnf} first addressed the problem of placing VNFs in a capacitated cloud network. The authors formulated the problem as a generalization of Facility Location and Generalized Assignment, and provided near optimal solutions with bi-criteria constant approximations guarantees.
However, the model in \cite{vnf} does not account for function ordering or {\em service chaining} nor flow routing optimization.
Subsequently, the work in \cite{csdp} introduced a flow based model that allows optimizing the distribution (function placement and flow routing) of services with arbitrary function relationships (\eg chaining) over capacitated cloud networks. Cloud services are described via a directed acyclic graph and the function placement and flow routing is determined by solving a minimum cost network flow problem 
with service chaining constraints.
In \cite{csdp}, it is shown that the cloud service distribution problem (CSDP) admits optimal polynomial-time solutions under convex cost functions, but remains hard otherwise.
The model in \cite{csdp} also allows accounting for multicast flows, in which case, general polynomial solvability requires a convex cost function and the ability to perform intra-session network coding.

In this paper, we address the design of fast approximation algorithms for the NFV Service Distribution Problem (NSDP), a specially relevant class of the CSDP in which the service graph is a line graph (\ie chain) and all flows are unicast. 

Our contributions can be summarized as follows:

\begin{itemize}
\item We formulate the NSDP as a minimum cost {\em multi-commodity-chain} network design (MCCND) problem on a \emph{cloud-augmented graph}, where the goal is to find the placement of service functions, the routing of client flows through the appropriate service functions, and the corresponding allocation of cloud and network resources that minimize the cloud network operational cost. 
\item We first address the case of load-proportional resource costs. 
We show that the fractional NSDP becomes a min-cost multi-commodity-chain flow (MCCF) problem that admits optimal polynomial-time solutions. We design a queue-length based algorithm, named QNSD, that is shown to provide an $O(\epsilon)$ approximation to the fractional NSDP in time $O(1/\epsilon)$. We further conjecture that QNSD exhibits an improved $O(1/\sqrt\epsilon)$ convergence -- a result that, while not formally proved in this paper, is illustrated via simulations.
\item We then address the case in which resource costs are a function of the integer number of allocated resources. 
We design a new algorithm, C-QNSD,
which constrains the evolution of QNSD
to effectively drive service flows to consolidate on a limited number of active resources, yielding good practical solutions to the integer NSDP.
\end{itemize}

The rest of the paper is organized as follows. We review related work in Section \ref{related}. Section \ref{model} introduces the system model. Section \ref{problem} describes the network flow based formulation for the NSDP.
Sections \ref{linear} and \ref{integer} present the proposed approximation algorithms for the fractional and integer NSDP, respectively.
We present simulations results in Section \ref{sim}, discuss possible extensions in Section \ref{discussion}, and conclude the paper in Section \ref{conclusions}.

\section{Related Work}
\label{related}

To the best of our knowledge, the algorithms presented in this paper are the first approximation algorithms for the NSDP. While the NSDP can be seen as a special case of the CSDP introduced in \cite{csdp}, the authors only provided a network flow based formulation, without addressing the design of efficient approximation algorithms.

As shown in Section \ref{linear}, the fractional NSDP is a generalization of the minimum cost multi-commodity flow (MCF) problem. A large body of work has addressed the design of fast fully polynomial time approximation schemes (FPTAS) for MCF. The work in \cite{Garg07} summarizes the best known FPTAS for MCF and fractional packing problems. Based on \cite{Garg07}, the fastest schemes use shortest-path computations in each iteration to provide $O(\epsilon)$ approximations in time $O(1/\epsilon^2)$.
The scheme in \cite{Bienstock06} runs in time $O(1/\epsilon)$, but requires solving a convex quadratic program in each iteration, yielding slower running times for moderately small $\epsilon$. Our proposed QNSD algorithm for the fractional NSDP is shown to provide an $O(\epsilon)$ approximation in time $O(1/\epsilon)$, while simply solving a set of linear time max-weight problems in each iteration. We further conjecture that the running time of our algorithm is in fact $O(1/\sqrt{\epsilon})$. QNSD is hence also an improved FPTAS for MCF.
In \cite{Cao14}, the authors extended the shortest-path based algorithms in \cite{Garg07} to design policy-aware routing algorithms that steer network flows through pre-defined sequences of network functions in order to maximize the total served flow. The problem addressed in \cite{Garg07} can be thought of as a maximum flow version of the fractional NSDP, but their algorithms still run in time $O(1/\epsilon^2)$.

With respect to the integer NSDP, we show in Section \ref{integer} that it is a generalization of the well known NP-hard multi-commodity network design (MCND) problem \cite{bundle}. Our proposed C-QNSD algorithm extends QNSD with knapsack relaxation techniques similar to those used in MCND \cite{bundle}.

\section{System Model}
\label{model}

\subsection{Cloud network model}

We consider a cloud network modeled as a directed graph $\mathcal G=(\mathcal V,\mathcal E)$ with $n=|\mathcal V|$ vertices and $m=|\mathcal E|$ edges representing the set of nodes and links, respectively. 
A cloud network node represents a distributed cloud location, in which virtual network functions or VNFs can be instantiated in the form of \emph{e.g.} virtual machines (VMs) over general purpose servers \cite{nfv}.
When service flows go through VNFs at a cloud node, they consume cloud resources (\eg cpu, memory).
We denote by $w_u$ the cost per cloud resource unit (\eg server) at node $u$ and by $c_u$ the maximum number of cloud resource units that can be allocated at node $u$.
A cloud network link represents a network connection between two cloud locations.
When service flows go through a cloud network link, they consume network resources (\eg bandwidth).
We denote by $w_{uv}$ the cost per network resource unit (\eg 1 Gbps link) on link $(u,v)$ and by $c_{uv}$ the maximum number of network resource units that can be allocated on link $(u,v)$.

\subsection{Service model}

A service $\phi\in\Phi$ is described by a chain of $M_{\phi}$ VNFs.
We use the pair $(\phi,i)$, with $\phi\in\Phi, i\in\{1,\dots,M_{\phi}\}$, to denote the $i$-th function of service $\phi$, and $L$ to denote the to total number of available VNFs.
Each VNF is characterized by its cloud resource requirement, which may also depend on the specific cloud location. 
We denote by $r^{(\phi,i)}_u$ the cloud resource requirement (in cloud resource units per flow unit) of function $(\phi,i)$ at cloud node $u$.
That is, when one flow unit goes through function $(\phi,i)$ at cloud node $u$, it consumes $r^{(\phi,i)}_u$ cloud resource units.
In addition, when one flow unit goes through the fundamental transport function of link $(u,v)$, it consumes $r^{tr}_{uv}$ network resource units.



A client requesting service $\phi\in\Phi$ is represented by a destination node $d\in\mathcal D(\phi)\subset\mathcal V$, where $\mathcal D(\phi)$ denotes the set of clients requesting service $\phi$.
The demand of client $d$ for service $\phi$ is described by a set of source nodes $\mathcal S(d,\phi)\in\mathcal V$ and demands $\lambda^{d,\phi}_s$, $\forall s\in\mathcal S(d,\phi)$, indicating that a set of source flows, each of size $\lambda^{d,\phi}_s$ flow units and entering the network at $s\in\mathcal S(d,\phi)$, must go through the sequence of VNFs of service $\phi$ before exiting the network at destination node $d\in\mathcal D(\phi)$.
We set $\lambda_u^{(d,\phi)}=0, \forall u\notin\mathcal S(d,\phi)$ to indicate that only nodes in $\mathcal S(d,\phi)$ have source flows for the request of client $d$ for service $\phi$. 
We note that the adopted destination-based client model allows the total number of clients to scale linearly with the size of the network, as opposed to the quadratic scaling of the source-destination client model.

\section{The NFV Service Distribution Problem}
\label{problem}

\begin{figure}
\centering
\includegraphics[width=3.5in]{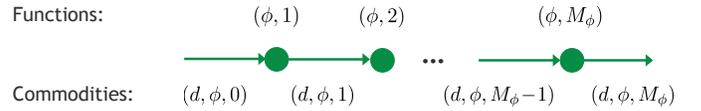}
\caption{Multi-commodity-chain flow model. Service $\phi$ for destination $d$ takes source commodity $(d,\phi,0)$ and processes it via function $(\phi,1)$ to create commodity $(d,\phi,1)$, which is then processed by function $(\phi,2)$... 
until 
function $(\phi,M_{\phi})$ 
produces final commodity $(d,\phi,M_{\phi})$.}
\label{flow_model}
\end{figure}

\begin{figure}
\centering
\includegraphics[width=1.3in]{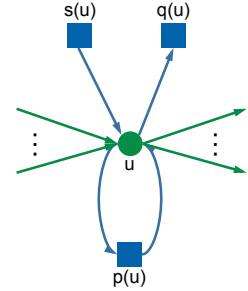}
\caption{Cloud-augmented graph for node $u$, where $p(u)$ represents the processing unit that hosts flow processing functions, $s(u)$ the source unit from which flows enter the cloud network, and $q(u)$ the demand unit via which flows exit the cloud network.}
\label{aug_graph}
\end{figure}


The goal of the NFV service distribution problem (NSDP) is to find the placement of service functions, the routing of service flows, and the associated allocation of cloud and network resources, that meet client demands  with minimum overall resource cost. In the following, we show how the NSDP can be solved by computing a {\em chained} network flow on a properly constructed graph.


We adopt a {\em multi-commodity-chain flow} (MCCF) model, in which a commodity is uniquely identified by the triplet $(d,\phi,i)$, which indicates that commodity $(d,\phi,i)$ is the output of the $i$-th function of service $\phi$ for client $d$ (see Fig. \ref{flow_model}).

We formulate the NSDP as a MCCF problem on the cloud-augmented graph that results from augmenting each node in $\mathcal G$ with the gadget in Fig. \ref{aug_graph},
where $p(u), s(u)$, and $q(u)$ denote the processing unit, source unit, and demand unit at cloud node $u$, respectively.
The resulting graph is denoted by $\mathcal G^a=(\mathcal V^a,\mathcal E^a)$, with $\mathcal V^a=\mathcal V\cup\mathcal V'$, $\mathcal E^a=\mathcal E\cup\mathcal E'$, and where $\mathcal V'$ and $\mathcal E'$ denote the set of processing, source, and demand unit nodes and edges, respectively.
We denote by $\delta^-(u)$ and $\delta^+(u)$ the set of incoming and outgoing neighbors of node $u$ in $\mathcal G^a$.

In the cloud-augmented graph $\mathcal G^a$, each edge $(u,v)\in\mathcal E^a$ has an associated capacity, unit resource cost, and per-function resource requirement, as described in the following.

For the set of edges $\{(u,p(u)), (p(u),u) : u\in\mathcal V\}$ representing the set of compute resources, we have:
\begin{itemize}
\item $c_{u,p(u)} = c_{u}$, $c_{p(u),u} = c_u^{max}$
\item $w_{u,p(u)} = w_u$, $w_{p(u),u} = 0$
\item $r^{(\phi,i)}_{u,p(u)} = r^{(\phi,i)}_u, \forall (\phi,i)$
\end{itemize}
where $c_u^{max}=\sum_{(d,\phi,i)}\sum_{s\in\mathcal S(d,\phi)} \lambda_s^{(d,\phi)} r_u^{(\phi,i)}$.
Note that we model the processing of network flows and associated allocation of compute resources using link $(u,p(u))$, and let edge $(p(u),u)$ be a free-cost edge of sufficiently high capacity that carries the processed flow back to node $u$.

For the set of edges $\{(s(u),u), (u,s(u)) : u\in\mathcal V\}$ representing the resources via which client flows enter and exit the cloud network, we have:
\begin{itemize}
\item $c_{s(u),u} = c_u^{max}$, $c_{u,q(u)} = c_u^{max}$
\item $w_{s(u),u} = 0$, $w_{u,q(u)} = 0$
\item $r^{(\phi,i)}_{s(u),u} = r^{(\phi,i)}_{u,q(u)} = 0, \forall (\phi,i)$
\end{itemize}
Note that we model the ingress and egress of network flows via free-cost edges of sufficiently high capacity. In addition, the irrelevant per-function resource requirement for these edges is set to zero.


Given that the set of network resources only perform the fundamental transport function of moving bits between cloud network nodes, the per-function resource requirement for the set of edges in the original graph $(u,v)\in\mathcal E$ is set to $r^{tr}_{uv}$ for all $(\phi,i)$, \ie $r^{(\phi,i)}_{uv} = r^{tr}_{uv}, \forall (\phi,i)$. The capacity and unit resource cost of network edge $(u,v)\in\mathcal E$ is given by $c_{uv}$ and $w_{uv}$, respectively.

We now define the following MCCF flow and resource allocation variables on the cloud-augmented graph $\mathcal G^a$:
\begin{itemize}
\item \emph{Flow variables:} 
$f^{(d,\phi,i)}_{uv}$ indicates the fraction of commodity $(d,\phi,i)$ on edge $(u,v)\in\mathcal E^a$, \ie the fraction of flow output of function $(\phi,i)$ for destination $d$ carried/processed by edge $(u,v)$.
\item \emph{Resource variables:} $y_{uv}$ indicates the total amount of resource units (\eg cloud or network resource units) allocated on edge $(u,v)\in\mathcal E^a$.
\end{itemize}


The NSDP can then be formulated via the following compact linear program:
\begin{subequations}\label{nsdp}
\begin{align}
&\text{min} &&
\sum_{(u,v)\in\mathcal E^a} w_{uv} \, y_{uv}    \label{obj}\\
&\text{s.t.} && \sum_{v\in\delta^{\!-\!}(u)} \!\! f_{vu}^{(d,\phi,i)} = \!\! \sum_{v\in\delta^{\!+\!}(u)} \!\! f_{uv}^{(d,\phi,i)} \qquad\quad\,\, \forall u, d, \phi, i \label{flowcons}\\
&&& f_{p(u),u}^{(d,\phi,i)} = f_{u,p(u)}^{(d,\phi,i-1)} \qquad\qquad\quad\quad \forall u, d, \phi, i \! \neq \! 0 \label{chain}\\
&&& \sum_{(d,\phi,i)} \! f_{uv}^{(d,\phi,i)} r^{(\phi,i+1)}_{uv} \! \leq y_{uv} \! \leq  c_{uv} \qquad\quad \forall (u,v) \label{cap}\\
&&& f_{s(u),u}^{(d,\phi,0)}=\lambda^{(d,\phi)}_{u}  \qquad\qquad\qquad\qquad\quad\,\,\, \forall u, d, \phi \label{source}\\
&&& f_{u,q(u)}^{(d,\phi,M_{\phi})}=0 \qquad\qquad\qquad\qquad\quad\, d, \phi, u\neq d \label{dest}\\
&&& f_{uv}^{(d,\phi,i)} \geq 0, \, y_{uv}\in\mathbb Z^+ \qquad\qquad\,\, \forall (u,v), d, \phi, i  \label{vars}
\end{align}
\end{subequations}
where, when not specified for compactness, $u\in\mathcal V$, $d\in\mathcal V$, $\phi\in\Phi$, $i\in\{1,\dots,M_{\phi}\}$, and $(u,v)\in\mathcal E^a$.



The objective is to minimize the overall cloud network resource cost, described by \eqref{obj}. Recall that the set $\mathcal E^a$ contains all edges in the augmented graph $\mathcal G^a$, representing both cloud and network resources.
Eq. \eqref{flowcons} describes standard flow conservation constraints applied to all nodes in $\mathcal V$.
A specially critical set of constraints are the {\em service chaining} constraints described by \eqref{chain}. These constraints establish that 
in order to have flow of a given commodity $(d,\phi,i)$ coming out of a processing unit, the input commodity $(d,\phi,i-1)$ must be entering the processing unit.
Constraints \eqref{cap} make sure that the total flow at a given cloud network resource is covered by enough resource units without violating capacity.\footnote{Recall from the MCCF model that commodity $(d,\phi,i)$ gets processed by function $(\phi,i+1)$, and that $r^{(\phi,i)}_{uv}=r^{tr}_{uv}, \forall (u,v)\in\mathcal E,\phi,i$.}
Eqs. \eqref{source} and \eqref{dest} establish the source and demand constraints.
Note from \eqref{dest} that no flows of final commodity $(d,\phi,M_{\phi})$ are allowed to exit the network other than at the destination node $d$.
Finally, \eqref{vars} describe the fractional and integer nature of flow and resource allocation variables, respectively.

In this work, we always assume fractional flow variables in order to capture the ability to split client flows to improve overall cloud network resource utilization.
With respect to the resource allocation variables, however,
we are interested in the following 
two main versions of the NSDP: 
\begin{itemize}
\item {\em Integer NSDP}:
The use of 
integer resource variables 
allows 
accurately capturing the allocation of an integer number of general purpose resource units (\eg servers).
In this case, the NSDP becomes a generalization of multi-commodity network design (MCND), where the network is augmented with {\em compute edges} that model the processing of service flows and where there are additional service chaining constraints that make sure flows follow service functions in the appropriate order.
In fact, for the special case that each service is composed of a single commodity, the NSDP is equivalent to the MCND.
We refer to the resulting MCCND problem as the integer NSDP. 
\item {\em Fractional NSDP}:
The use of 
fractional resource variables becomes a specially 
good compromise between accuracy and tractability
when the size of the cloud network resource units is much smaller than the total flow served at a given location. 
This is indeed the case for services that serve a large number of large-size flows (\eg telco services \cite{nfv}) and/or services deployed using small-size resource units (\eg micro-services \cite{micro}).
In this case, the NSDP becomes a generalization of min-cost MCF, with the exact equivalence holding in the case of single commodity services.
We refer to the resulting MCCF problem as the fractional NSDP.
\end{itemize}



\section{Fractional NSDP}
\label{linear}

As discussed in the previous section, the fractional NSDP can be formulated as the MCCF problem that results from the linear relaxation of \eqref{nsdp}, \ie by replacing $y_{uv}\in\mathbb Z^+$ with $y_{uv}\geq 0$ in \eqref{vars}.
While the resulting linear programming formulation admits optimal polynomial-time solutions, it requires solving a linear program with a large number of constraints.
In this work, we are interested in designing a fast fully polynomial approximation scheme (FPTAS) for the fractional NSDP.
A natural direction 
would be to build on state of the art FPTAS for MCF that rely on shortest-path computations \cite{Garg07,Cao14}. However, these techniques have only been shown to provide $O(\epsilon)$ approximations in time $O(1/\epsilon^2)$. Here, we target the design of faster approximations with order improvements in running time, \ie $O(1/\epsilon)$ and $O(1/\sqrt\epsilon)$, by redesigning queue-length based algorithms that have been shown to be very effective for stochastic network optimization, in order to construct fast iterative algorithms for static MCCF problems such as the fractional NSDP.

\subsection{QNSD algorithnm}
\label{QNSD_algorithnm}

In the following, we describe the proposed queue-length based network service distribution (QNSD) algorithm.
QNSD is
an iterative algorithm that mimics the time evolution of an underlying cloud network queueing system.
QNSD exhibits the following main key features:
\begin{itemize}
\item QNSD builds on a dynamic cloud network control algorithm recently introduced in \cite{dcnc}, which was designed to stabilize a stochastic cloud network system while minimizing the time-average resource cost, and uses the average over the algorithm iterations to compute the solution to the fractional NSDP.
\item Inspired by a recent result that characterizes the transient and steady state phases in queue-length based algorithms \cite{Neely14}, QNSD computes the solution to the fractional NSDP by averaging over a limited iteration horizon, yielding an $O(\epsilon)$ approximation in time $O(1/\epsilon)$. 
In the algorithm description, we use $j\in\mathbb Z^+$ to index the \emph{iteration frame} over which averages are computed.
\item QNSD further exploits the speed-up shown in gradient methods for convex optimization
when combining gradient directions over consecutive iterations \cite{Tseng08}, \cite{heavy_ball},
leading to a conjectured $O(1/\sqrt{\epsilon})$ convergence.
\end{itemize}

Before describing the algorithm, we shall introduce the following queue 
variables:
\begin{itemize}
\item \emph{Actual queues}:
$Q_u^{(d,\phi,i)}(t)$ denotes the queue backlog of commodity $(d,\phi,i)$ at node $u\in\mathcal V$ in iteration $t$. 
These queues represent the actual packet build-up that would take place in an equivalent dynamic cloud network system in which iterations correspond to time instants.
\item \emph{Virtual queues}:
$U_u^{(d,\phi,i)}(t)$ denotes the virtual queue backlog of commodity $(d,\phi,i)$ at node $u\in\mathcal V$ in iteration $t$. 
These virtual queues are used to capture the \emph{momentum} generated when combining differential queue backlogs (acting as gradients in our algorithm) over consecutive iterations \cite{heavy_ball}. 
\end{itemize}

\hfill

The QNSD algorithm works as follows:

\subsubsection{\underline{Initialization}}
\begin{align}
& f_{uv}^{(d,\phi,i)}(0) = y_{uv}(0) = 0 \qquad\qquad\qquad\qquad \forall (u,v) \notag \notag\\
& Q_u^{(d,\phi,i)}(0) = 0 \qquad\qquad\qquad\qquad\quad \forall (u,v), d, \phi, i \notag\\
& Q_d^{(d,\phi,M_\phi)}(t) = 0 \qquad\qquad\qquad\qquad\qquad\quad \forall d, \phi, t \notag\\
& U_u^{(d,\phi,i)}(0) =  U_u^{(d,\phi,i)}(-1) = 0 \qquad\qquad\, \forall u, d, \phi, i \notag\\
& U_d^{(d,\phi,M_{\phi})}(t) = 0 \qquad\qquad\qquad\qquad\qquad\, \forall d, \phi, i, t \notag\\
& f_{s(u),u}^{(d,\phi,i)}(t) =  \left \{
\begin{array}{lll}
 \lambda^{(d,\phi)}_u  &\qquad \forall u  \in\mathcal S(d,\phi),i=0, t \notag\\
 0 &\qquad   \mbox{\rm otherwise}
\end{array}
\right. \notag\\
& j = 0 \notag
\end{align}
Note that the queues associated with the final commodities at their respective destinations are set to zero for all $t$ to model the egress of flows from the cloud network.

%

\subsubsection{\underline{Main Procedure}}
In each iteration $t > 0$:

\begin{itemize}

\item \emph{Queue updates:}
For all $(d,\phi,i) \neq (u,\phi,M_{\phi})$:
\begin{align}
& \!\!\! Q_u^{(d,\phi,i)}(t) =  \left[ Q_u^{(d,\phi,i)}(t\!-\!1) - \! \sum_{v\in\delta^{\!+\!}(u)} \!\! f_{uv}^{(d,\phi,i)}(t\!-\!1) \right.  \notag\\
&\qquad\qquad\quad \left. + \! \sum_{v\in\delta^{\!-\!}(u)} \!\! f_{vu}^{(d,\phi,i)}(t\!-\!1) 
\right]^+ \label{queue_dynamics}\\
& \!\!\! \Delta Q_{uv}^{(d,\phi,i)}(t) = Q_u^{(d,\phi,i)}(t) - Q_u^{(d,\phi,i)}(t\!-\!1) \\
& \!\!\! U_{u}^{(d,\phi,i)}(t) = U_{u}^{(d,\phi,i)}(t\!-\!1) + \Delta Q_{uv}^{(d,\phi,i)}(t) \notag\\
&\qquad\qquad + \theta \left( U_{u}^{(d,\phi,i)}(t\!-\!1) - U_{u}^{(d,\phi,i)}(t\!-\!2) \right) \label{virtual_queue}
\end{align}
where $\theta\in[0,1)$ is a control parameter that
drives the differential queue backlog momentum.
Note that actual queues are updated according to standard queuing dynamics, while virtual queues are updated as a combination of the actual differential queue backlog and the virtual differential queue backlog in the previous iteration.

\item \emph{Transport decisions:}
For each link $(u,v)\in\mathcal E$:
\begin{itemize}

\item Compute the \emph{transport utility weight} of each commodity $(d,\phi,i)$:
\begin{align}
W_{uv}^{(d,\phi,i)}(t) = \frac{1}{r_{uv}^{tr}} \left( U_u^{(d,\phi,i)}(t) - U_v^{(d,\phi,i)}(t) \right) \notag
\end{align}
where $V$ is a control parameter that governs the optimality v.s. running-time tradeoff.

\item Compute the max-weight commodity $(d,\phi,i)^*$:
\begin{align}
(d,\phi,i)^* \! =  \underset{(d,\phi,i)}{\arg\max} \left\{ W_{uv}^{(d,\phi,i)}(t) \right\} \notag
\end{align}


\item Allocate network resources and assign flow rates
\begin{align}
& y_{uv}(t) =
\left \{
\begin{array}{lll}
c_{uv}  &\qquad  \text{if } W_{uv}^{(d,\phi,i)}(t) - V w_{uv}> 0 \notag\\
0 &\qquad   \mbox{\rm otherwise} \notag\\
\end{array}
\right. \notag\\
& f_{uv}^{(d,\phi,i)^*} = y_{uv}/r_{uv}^{tr} \notag\\
& f_{uv}^{(d,\phi,i)}(t) = 0, \quad \forall (d,\phi,i) \neq (d,\phi,i)^* \notag
\end{align}

\end{itemize}

\item \emph{Processing decisions:}
For each node $u\in\mathcal V$:

\begin{itemize}

\item Compute the \emph{processing utility weight} of each commodity $(d,\phi,i)$:
\begin{align}
& W_{u}^{(d,\phi,i)}(t)  = \frac{1}{r_u^{(\phi,i\!+\!1)}} \left( U_u^{(d,\phi,i)}(t) - U_u^{(d,\phi,i+1)}(t) \right) \notag
\end{align}
This key step in the QNSD computes the benefit of processing commodity $(d,\phi,i)$ via function $(\phi,i\!+\!1)$ at node $u$ in iteration $t$, taking into account the difference between the (virtual) queue backlog of commodity $(d,\phi,i)$ and that of the next commodity in the service chain $(d,\phi,i\!+\!1)$. Note also how a high cloud resource requirement $r_u^{(\phi,i\!+\!1)}$ reduces the benefit of processing commodity $(d,\phi,i)$.

\item Compute max-weight commodity $(d,\phi,i)^*$:
\begin{align}
(d,\phi,i)^*  =  \underset{(d,\phi,i)}{\arg\max} \left\{ W_{uv}^{(d,\phi,i)}(t) \right\} \notag
\end{align}


\item Allocate cloud resources and assign flow rates
\begin{align}
& y_{u,p(u)}(t) =
\left \{
\begin{array}{lll}
c_{u}  &\quad  \text{if } W_{uv}^{(d,\phi,i)}(t) - V w_{uv} > 0 \notag\\
0 &\quad   \mbox{\rm otherwise} \notag\\
\end{array}
\right. \notag\\
& y_{p(u),u} = y_{u,p(u)} \notag\\
& f_{u,p(u)}^{(d,\phi,i)^*}(t) = y_{u,p(u)}/r_{u}^{(\phi,i\!+\!1)} \notag\\
& f_{p(u),u}^{(d,\phi,i+1)^*}(t) = f_{u,p(u)}^{(d,\phi,i)^*}(t) \notag\\
& f_{u,p(u)}^{(d,\phi,i)}(t) = f_{p(u),u}^{(d,\phi,i)}(t) = 0 \quad \forall (d,\phi,i) \neq (d,\phi,i)^* \notag
\end{align}

\end{itemize}

\item \emph{Solution construction:}
\begin{itemize}
\item If \ $t=2^{j}$, \ then\ $t_{start} = t$ and $j=j+1$ \notag

\item Flow solution
\begin{align}
& \bar f_{uv}^{(d,\phi,i)} = \frac{{\sum_{\tau=t_{start}}^{t} f_{uv}^{(d,\phi,i)} (\tau)}}{t-t_{start}+1}  \quad\,\,\,\, \forall (u,v), d, \phi, i
\end{align}

\item Resource allocation solution
\begin{align}
& \bar y_{uv} = \frac{{\sum_{\tau=t_{start}}^{t} y_{uv}(\tau)}}{t-t_{start}+1}  \qquad\qquad\qquad \forall (u,v)
\end{align}

\end{itemize}


\end{itemize}

\hfill $\square$


\begin{rem}\label{rem1}
Note that QNSD solves $m+n$ max-weight problems in each iteration, leading to a running-time per iteration of $O((m+n)nL)=O(mnL)$.\footnote{Recall that the number of clients scales as $O(n)$ and $L$ is the total number of functions. Hence, the number of commodities scales as $O(nL)$.}
\end{rem}

\subsection{Performance of QNSD}

\begin{thm}
\label{thm1}
\emph{Let the input service demand ${\bm\lambda}=\{\lambda_u^{(d,\phi)}\}$ be such that the fractional NSDP is feasible and the Slater condition is satisfied. 
Then, letting $V=1/\epsilon$,
the QNSD algorithm provides an $O(\epsilon)$ approximation to the fractional NSDP in time $O(1/\epsilon)$.
Specifically, for all $t \ge T(\epsilon)$,
the QNSD solution $\{\bar f_{uv}^{(d,\phi,i)}, \bar y_{uv}\}$
satisfies: }
\begin{align}
& \! \sum_{(u,v)\in\mathcal E^a} \! w_{uv} \, \bar y_{uv}  \leq  h^{ \sf opt} + O(\epsilon) \label{eq:thm1}\\
& \! \sum_{v\in\delta^{\!-\!}(u)} \! \bar f_{vu}^{(d,\phi,i)} - \! \sum_{v\in\delta^{\!+\!}(u)} \! \bar f_{uv}^{(d,\phi,i)} \leq O(\epsilon)  \qquad\,\,\, \forall u, d, \phi, i \label{eq:thm2} \\
& \eqref{chain}-\eqref{vars}
\end{align}
\emph{where $h^{\sf opt}$ denotes the optimal objective function value and $T(\epsilon)$ is an $O(1/\epsilon)$ function, whose expression is derived in the theorem's proof given in Section \ref{Appedix:DCNC}. }
 \end{thm}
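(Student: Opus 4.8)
The plan is to proceed via a Lyapunov drift-plus-penalty argument on the underlying queueing system that QNSD simulates, in the spirit of \cite{dcnc,Neely14}, and then convert the queue stability and time-average cost guarantees into the approximation bounds \eqref{eq:thm1}--\eqref{eq:thm2}. First I would set up the quadratic Lyapunov function $\Phi(t) = \tfrac{1}{2}\sum_{u,d,\phi,i} \big(Q_u^{(d,\phi,i)}(t)\big)^2$ (working first with $\theta=0$, i.e.\ identifying the virtual queues with the actual queues, and deferring the momentum term), and compute a one-step drift bound $\Delta\Phi(t) + V\sum_{(u,v)} w_{uv} y_{uv}(t) \le B + V\sum_{(u,v)} w_{uv}y_{uv}(t) - \sum Q\cdot(\text{net flow})$, where $B=O((m+n)(\text{max rate})^2)$ is a constant absorbing the squared-increment terms and uses the finite capacities $c_{uv}$. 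The crucial observation is that the per-link and per-node max-weight transport/processing decisions of QNSD are exactly the decisions that \emph{minimize} the right-hand side of this drift-plus-penalty bound over all feasible single-slot allocations; hence the QNSD drift is no larger than the drift of any other policy, in particular a randomized stationary policy achieving the Slater point.

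Next I would invoke the Slater (strict feasibility) assumption: there exists a feasible fractional solution $\{f^\star, y^\star\}$ routing all demand with all flow-conservation slacks at least some $\eta>0$ and objective value $h^{\sf opt}$ (or within any desired tolerance of it). Plugging this policy into the drift bound yields $\Delta\Phi(t) + V\sum w_{uv}y_{uv}(t) \le B + V\,h^{\sf opt} - \eta\sum_{u,d,\phi,i} Q_u^{(d,\phi,i)}(t)$. Telescoping from $t_{start}$ to $t$ and dividing by $t - t_{start}+1$ gives simultaneously (i) a bound on the time-average cost, $\tfrac{1}{t-t_{start}+1}\sum_\tau \sum w_{uv}y_{uv}(\tau) \le h^{\sf opt} + \tfrac{B}{V} + \tfrac{\Phi(t_{start})}{V(t-t_{start}+1)}$, which by convexity (linearity) of the objective transfers to the averaged iterate $\bar y_{uv}$, and (ii) a bound on the time-average total backlog, $\tfrac{1}{t-t_{start}+1}\sum_\tau \sum Q(\tau) \le \tfrac{B + V h^{\sf opt} + \Phi(t_{start})/(t-t_{start}+1)}{\eta} = O(V)$. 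The second bound, combined with the queue dynamics \eqref{queue_dynamics} which imply $Q_u^{(d,\phi,i)}(t) \ge Q_u^{(d,\phi,i)}(t_{start}-1) + \sum_\tau(\text{in}-\text{out})$, shows the averaged net flow imbalance at each node is $O(V/(t-t_{start}+1))$; choosing the iteration frame so that $t - t_{start}+1 = \Theta(t) = \Theta(V^2)$ — which is exactly what the doubling rule $t=2^j$ achieves once $t \ge T(\epsilon) = \Theta(V^2) = \Theta(1/\epsilon^2)$...

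Here I need to reconcile this with the claimed $O(1/\epsilon)$ time: the sharper accounting of \cite{Neely14} distinguishes a transient phase of length $O(V)$ after which queues are already $O(V)$ pointwise (not merely on time-average), so averaging over an additional window of length $\Theta(V)$ suffices to drive both the cost gap and the imbalance to $O(1/V)=O(\epsilon)$, giving $T(\epsilon)=\Theta(V)=\Theta(1/\epsilon)$. I would make this precise by establishing a pointwise bound $Q_u^{(d,\phi,i)}(t) = O(V)$ for all $t$ via a separate Lyapunov argument showing $\Phi(t)$ cannot exceed $O(V^2)$ once the negative drift $-\eta\sum Q$ kicks in whenever $\sum Q$ is large, then feeding this uniform backlog bound (rather than a telescoped average) into the flow-imbalance estimate. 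The remaining constraints listed as \eqref{chain}--\eqref{vars} hold \emph{exactly} for the QNSD iterates by construction — the chaining equality \eqref{chain} is enforced in the processing step, the fixed source injections give \eqref{source}, the capacity/requirement coupling \eqref{cap} is how $y_{uv}(t)$ and $f_{uv}(t)$ are jointly set, and nonnegativity/destination conditions are built into the updates — so they survive averaging by linearity. The main obstacle, and the part I expect to require the most care, is handling the momentum/virtual-queue term ($\theta>0$): the update \eqref{virtual_queue} makes $U$ a non-Markovian, weighted accumulation of past differential backlogs, so the standard quadratic Lyapunov drift no longer telescopes cleanly. I would address this by adopting an augmented Lyapunov function that also penalizes $\sum(U_u^{(d,\phi,i)}(t) - U_u^{(d,\phi,i)}(t-1))^2$ (a "heavy-ball"-type potential as in \cite{heavy_ball,Tseng08}), showing its drift is still dominated by the QNSD max-weight choice up to controllable cross terms bounded using $\theta<1$ and the capacity constants; this is also precisely where the conjectured $O(1/\sqrt\epsilon)$ acceleration would come from, though for Theorem~\ref{thm1} it suffices to show the momentum term does not destroy the $O(1/\epsilon)$ guarantee, for which a crude bound on the accumulated momentum energy is enough.
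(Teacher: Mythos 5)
There is a genuine gap, and it sits exactly at the point you flag as needing reconciliation. Your fix for the $O(1/\epsilon)$ running time is to establish a \emph{pointwise} bound $\|\Qm(t)\|=O(V)$ and feed it into the telescoped estimates over a frame of length $\Theta(V)$. That bound is true (it is essentially Lemma~\ref{Lemma:Bound} of the paper, and it holds for all $t$, not only after a transient), but it is not strong enough: the flow-imbalance estimate obtained from \eqref{queue_dynamics} is controlled by the \emph{variation} $\|\Qm(t_{start}+\Delta)-\Qm(t_{start})\|$ divided by the frame length, and with only a magnitude bound this variation can be $O(V)$, giving $O(V)/\Theta(V)=O(1)$ rather than $O(\epsilon)$. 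The same problem hits the cost bound: the telescoped term $\bigl(\|\Qm(t_{start})\|^2-\|\Qm(t_{start}+\Delta)\|^2\bigr)/\bigl(2V\Delta\bigr)$ is only $O(V^2)/(V\cdot V)=O(1)$ under a magnitude bound, so your argument as written delivers $O(\epsilon)$ accuracy only with $\Delta=\Theta(V^2)$, i.e.\ $T(\epsilon)=O(1/\epsilon^2)$ --- the rate you correctly computed in the first part of your plan, and no better.

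What the paper actually proves (Section~\ref{Appedix:DCNC}) is a dual-space attraction statement that your plan is missing: the queue vector converges to, and then remains within, a $V$-\emph{independent} distance $D+\sqrt{B}$ of the maximizer set $\mathcal H^*$ of the dual function $Z(\Qm)=\inf_{[\yv,\fv]\in\mathcal X}\{V\wv^\dagger\yv+(\Am\fv)^\dagger\Qm\}$. Because $Z$ is piecewise linear concave (locally polyhedral), outside a fixed neighborhood of $\mathcal H^*$ the distance $\|\Qm(t)-\Qm^*\|$ decreases by a constant $L_\gamma/2$ per iteration (Lemma~\ref{lemma0}); since $\|\Qm^*\|=O(V)$ by the Slater bound, the hitting time of that neighborhood is $O(mV)$, and once inside the iterates are trapped (Lemma~\ref{lemma1}). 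It is this $O(1)$ bound on the queue \emph{fluctuation} across the averaging frame --- not the $O(V)$ bound on its magnitude --- that makes $\|\Qm(t_j)\|^2-\|\Qm(t_j+\Delta_{t_j})\|^2=O(m^2V)$ (via the cross term $2\Qm^{*\dagger}[\Qm(t_j)-\Qm(t_j+\Delta_{t_j})]$) and the per-node queue change $O(D+\sqrt B)$, so that a frame of length $\Theta(mV)$ already yields the $O(m\epsilon)$ cost gap and $O(\epsilon)$ violation in \eqref{eq:thm1}--\eqref{eq:thm2}, i.e.\ $T(\epsilon)=O(m/\epsilon)$. Without this locally-polyhedral dual-convergence step (or an equivalent argument bounding the drift of $\Qm$ around a fixed point over the frame), your proposal does not reach the claimed $O(1/\epsilon)$ time. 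The remaining elements of your plan are consistent with the paper: the max-weight decisions minimizing the drift-plus-penalty bound, the Slater-based queue bound, the per-iteration feasibility of \eqref{chain}--\eqref{vars} surviving averaging, and the observation that the momentum term can be set aside ($\theta=0$ suffices; the paper's proof likewise uses the actual queues and does not analyze $\theta>0$).
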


\begin{proof}
See Appendix in Section \ref{Appedix:DCNC}.
\end{proof}

\hfill

\begin{rem}\label{rem2}
While the claim of Theorem \ref{thm1} does not specify the dependence of the approximation on the size of the cloud network ($m,n$),  
in Section \ref{Appedix:DCNC}, we show that, in time $O(m/\epsilon)$, the total cost approximation is $O(m\epsilon)$ and the flow conservation violation is $O(\epsilon)$.
%
%
The total running time of QNSD is then $O(\epsilon^{-1}m^2nL)$. 
\end{rem}

\hfill

\begin{conj}\label{conj1}
\emph{With a properly chosen $\theta\in[0,1)$, QNSD provides an $O(\epsilon)$ approximation to the fractional NSDP in time $O(1/\sqrt{\epsilon})$.}
\end{conj}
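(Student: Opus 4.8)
\noindent\textbf{Proof approach for Conjecture~\ref{conj1}.}
The plan is to recast QNSD as an \emph{accelerated} dual method for the Lagrangian dual of the linear relaxation of \eqref{nsdp}, and to import the $O(1/t^{2})$-type rate of accelerated first-order schemes, so that $\epsilon$-accuracy is reached in $t=O(1/\sqrt\epsilon)$ iterations rather than $O(1/\epsilon)$. First I would dualize the flow-conservation constraints \eqref{flowcons} and the chaining constraints \eqref{chain}, attaching a multiplier to each commodity/node pair; the concave dual $D(\cdot)$ is evaluated by a primal inner minimization that decouples over edges and over nodes and exactly reproduces the per-link and per-node \emph{max-weight} subproblems of QNSD (the transport and processing decisions). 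Hence $\Delta Q(t)$ --- arrivals minus departures under the flows chosen at iteration $t$ --- is, up to the $[\cdot]^{+}$ projection, a subgradient of $-D$ at the current queue vector, the actual-queue recursion \eqref{queue_dynamics} is projected subgradient ascent on $D$, and the factor $V=1/\epsilon$ that scales the cost relative to the quadratic Lyapunov term $\tfrac12\sum\big(Q^{(d,\phi,i)}_u\big)^{2}$ plays the role of a smoothing parameter making the effective dual $O(V)$-smooth.

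Second, I would identify the virtual-queue update \eqref{virtual_queue}, $U(t)=U(t-1)+\Delta Q(t)+\theta\,(U(t-1)-U(t-2))$, as a heavy-ball (Polyak) iteration on this smoothed dual with momentum coefficient $\theta$, the point being that the transport and processing decisions are taken with respect to $U$ rather than $Q$, which is exactly what turns the plain recursion into the accelerated one. Two ingredients are needed to obtain the accelerated rate: (i) the $O(V)$-smoothness just noted, and (ii) a \emph{sharpness}/error-bound property of the piecewise-linear dual near its optimal face (which holds generically and is quantified by the Slater margin), giving a quadratic-growth-type inequality. With these, choosing $\theta=\theta(\epsilon)$ of the form $1-\Theta(\sqrt\epsilon)$ --- the momentum associated with an effective condition number $\Theta(V)=\Theta(1/\epsilon)$, as in \cite{heavy_ball,Tseng08} --- and reading the frame-doubling rule $t=2^{j}$, $t_{start}=t$ as a periodic \emph{restart}, the dual-value gap over a frame contracts so that after $O(1/\sqrt\epsilon)$ iterations $D^{\sf opt}-D(U(t))=O(\epsilon)$ and the queues stay $O(1/\epsilon)$-bounded.

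Third, I would translate this into the primal guarantee of Theorem~\ref{thm1} at the faster rate: a standard primal-averaging (ergodic) argument shows that the running averages $\{\bar f,\bar y\}$ taken over the last $\Theta(t)$ iterations are $O(\epsilon)$-feasible for \eqref{flowcons} and have objective within $O(\epsilon)$ of $h^{\sf opt}$, once the dual iterates are $O(1/\epsilon)$-bounded; controlling $\|Q(t)-U(t)\|$ within a frame --- a bookkeeping step using boundedness of $\Delta Q$ --- closes the loop and yields \eqref{eq:thm1}--\eqref{eq:thm2} for all $t\ge T(\epsilon)$ with $T(\epsilon)=O(1/\sqrt\epsilon)$.

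The hard part will be step two. The heavy-ball method provably accelerates only for smooth, locally strongly convex objectives, whereas the LP dual is merely piecewise linear, so the argument must genuinely exploit the regularization induced by the quadratic Lyapunov term together with the $V$-scaling and the sharpness of the dual to obtain an effective well-conditioned surrogate, and must handle the projection onto the nonnegative orthant inside an accelerated/restarted scheme (a proximal accelerated analysis in the spirit of \cite{Tseng08}). A secondary difficulty is the two-queue coupling --- decisions use $U$ while the Lyapunov and feasibility accounting live on $Q$ --- so the perturbation between the accelerated trajectory $U(t)$ and the true dual trajectory $Q(t)$ must be shown to be of lower order than the $O(1/\sqrt\epsilon)$ target; the restart interpretation of the doubling rule is the natural device for keeping both this perturbation and the momentum under control, and pinning down the exact dependence of the admissible $\theta$ on $\epsilon$ (and on $m,n,L$) is what a full proof would have to settle.
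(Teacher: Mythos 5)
First, note that the paper does not prove this statement at all: it is stated as a conjecture, supported only by three heuristic observations (the $\theta=0$ analysis of Theorem~\ref{thm1}, the heavy-ball result of \cite{heavy_ball} for stochastic network optimization, and the simulations of Section~\ref{sim}). Your proposal is therefore not competing with a proof in the paper; it is a roadmap that largely formalizes the same intuition the authors cite --- interpret \eqref{virtual_queue} as Polyak momentum on the dual of the relaxed \eqref{nsdp}, invoke \cite{heavy_ball,Tseng08}, and convert dual progress into the primal-averaged guarantee of Theorem~\ref{thm1}. As a plan this is reasonable and consistent with the paper's stated rationale, but it does not close the conjecture, and two of its load-bearing claims are not correct as stated.

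The genuine gap is your step two. (i) The dual $Z(\Qm)$ in \eqref{dnsdp2} is the infimum over the polytope $\mathcal X$ of a function that is linear in $\Qm$, hence piecewise linear and nonsmooth for \emph{every} $V$; the scaling $V=1/\epsilon$ rescales the cost term but introduces no smoothing, and the quadratic Lyapunov function $\tfrac12\|\Qm\|^2$ is an analysis device, not a proximal regularizer added to the subproblems --- indeed the per-iteration max-weight problems are linear and solved at vertices, which is exactly why the dual stays polyhedral. So the claimed ``$O(V)$-smooth effective dual'' does not exist without modifying the algorithm (e.g., Nesterov smoothing of the primal), and heavy-ball acceleration has no known guarantee for merely sharp piecewise-linear objectives; this is precisely the obstruction you flag yourself, and precisely why the authors leave the statement as a conjecture rather than a theorem. (ii) The doubling rule $t=2^{j}$ resets only the averaging window $t_{start}$, not the iterates $\Qm(t)$ or $U(t)$, so it is not a restart of the momentum scheme in the sense required by restarted-acceleration analyses under sharpness; treating it as one would need a separate argument (or a change to the algorithm). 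What the paper's own machinery does give you is the sharpness-type bound \eqref{eq:poly} with constant $L_\gamma$ and the $\theta=0$ rate $T(\epsilon)=O(m/\epsilon)$; a complete proof would have to show that adding the $U$-momentum improves the hitting time $\tau_{\mathcal R_D}$ and the frame length from $O(V)$ to $O(\sqrt V)$ while keeping the $\|\Qm(t)-U(t)\|$ perturbation lower order --- none of which follows from the ingredients you list, so the conjecture remains open under your approach as well.
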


\hfill $\blacksquare$

Our conjecture is based on the fact that: {\em i}) as shown in  the proof of Theorem \ref{thm1}, $\theta=0$ is sufficient for QNSD to achieve $O(1/\epsilon)$ convergence,  {\em ii}) recent results have shown $O(1/\sqrt{\epsilon})$ convergence of queue-length based algorithms for stochastic optimization when including a first-order memory or momentum of the differential queue backlog \cite{heavy_ball}, {\em iii}) simulation results in Section \ref{sim} show a significant improvement in the running time of  QNSD with nonzero $\theta$.


\section{Integer NSDP}
\label{integer}


It is immediate to show that the integer NSDP is NP-Hard by reduction from MCND. Recall that the integer NSDP is equivalent to MCND for the special case of single-commodity services. Hence, no $O(\epsilon)$ approximation can in general be computed in sub-exponential time.
After recognizing the difficulty of approximating the integer NSDP,
we now establish key observations on the behavior of QNSD that allows us to add a simple, yet effective, condition on the evolution of QNSD that enables constructing a solution to the integer NSDP of  good practical performance.

We first observe that the QNSD algorithm evolves by allocating an integer number of resources in each iteration. In fact, QNSD solves a max-weight problem in each iteration and allocates either zero or the maximum number of resource units to a single commodity at each cloud network location.
However, the solution in each iteration may significantly violate flow conservation constraints.
On the other hand, the average over the iterations is shown to converge to a feasible, but, in general, fractional solution.
Based on these observations,
we propose C-QNSD (constrained QNSD), a variation of QNSD designed to constrain the solution over the algorithm iterations to satisfy flow conservation across consecutive iterations, such that when the iterative flow solution converges, we can guarantee a feasible solution to the integer NSDP.
C-QNSD works just as QNSD, but with the max-weight problems solved in each iteration 
replaced by the fractional knapsak problems that result from adding the \emph{conditional flow conservation constraints}:
\begin{align}\label{cond}
& \sum_{v\in\delta^{\!+\!}(u)} \!\! f_{uv}^{(d,\phi,i)} (t) \leq  \!\! \sum_{v\in\delta^{\!-\!}(u)} \!\! f_{vu}^{(d,\phi,i)} (t\!-\!1) \qquad \forall d,\phi,i
\end{align}
Specifically, in each iteration of the main procedure, after the queue updates described by \eqref{queue_dynamics}-\eqref{virtual_queue}, the transport and processing decisions are jointly determined as follows:
\begin{itemize}
\item \emph{Transport and processing decisions:}
For each $u\in\mathcal V$: 
\begin{subequations}\label{knapsak}
\begin{align}
&\text{max} &&
\!\!\!\!   \sum_{v\in\delta^{\!+\!}(u)}\sum_{(d,\phi,i)} W^{(d,\phi,i)}_{uv} f^{(d,\phi,i)}_{uv}(t)  - V y_{uv}(t) w_{uv} \notag\\
&\text{s.t.} && \!\!\! \eqref{cond}, \eqref{cap}, \eqref{vars} \notag\\
& \text{where} \notag\\
&&& \!\!\!\!\!\! W_{uv}^{(d,\phi,i)}(t) = \frac{1}{r_{uv}^{tr}} \left( U_u^{(d,\phi,i)}(t) - U_v^{(d,\phi,i)}(t) \right) \notag\\
&&& \qquad\qquad\qquad\qquad\qquad\qquad \forall v\in\delta^{\!+\!}(u)\backslash \{p(u)\} \notag\\
&&& \!\!\!\!\!\! W_{u,p(u)}^{(d,\phi,i)}(t)  = \frac{1}{r_u^{(\phi,i\!+\!1)}} \left( U_u^{(d,\phi,i)}(t) - U_u^{(d,\phi,i\!+\!1)}(t) \right) \notag
\end{align}
\end{subequations}

\end{itemize}

Observe that without the \emph{conditional flow conservation} constraints \eqref{cond}, the problem above can be decoupled into the set of
max-weight problems whose solutions drive the resource allocation and flow rate assignment of QNSD.
When including \eqref{cond}, the solution to the above maximization problem is forced to fill-up the cloud network resource units with multiple commodities, smoothing the evolution towards a feasible integer solution.
In C-QNSD, the above maximization problem is solved via a linear-time heuristic that decouples the problem into a set of fractional knapsacks, one for each neighbor node $v\in\delta^{\!+\!}(u)$ and resource allocation choice $y_{uv}\in\{0,1,\dots,c_{uv}\}$.

As shown in the following section, C-QNSD effectively consolidates service flows into a limited number of active resources. 
Providing some form of performance guarantee is of interest to the authors, but out of the scope of this paper.

\begin{figure}
        \includegraphics[height=3.5cm]{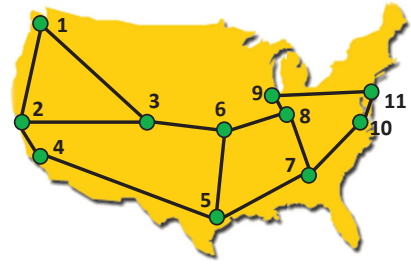}
        \centering{
        \caption{Abilene US continental network topology.}
        \label{abilene}}
\end{figure}

\begin{figure*}[ht]
\centering
\subfigure[]{
\centering \includegraphics[width=6cm,]{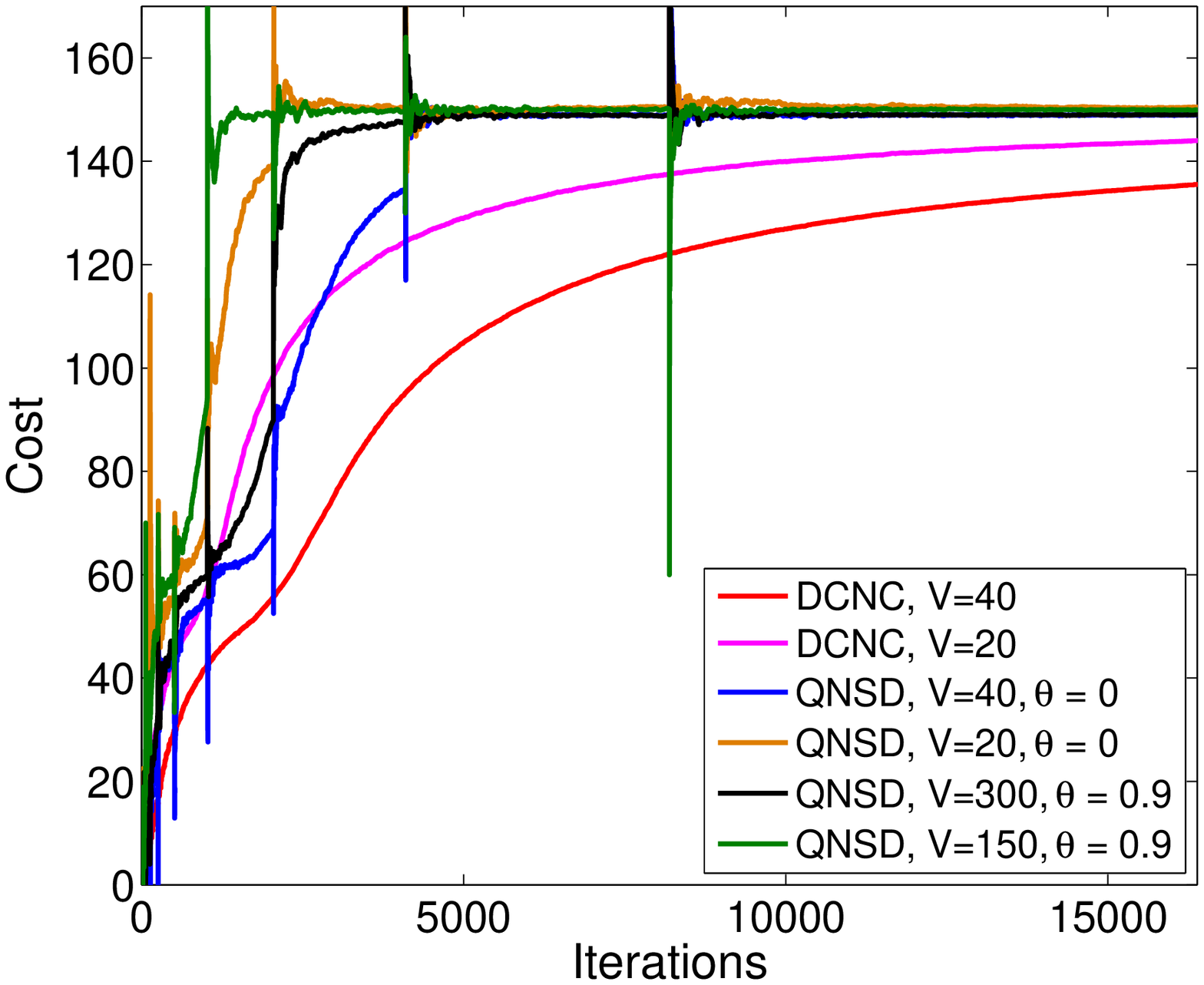}
\label{fig2050}
}
\hspace{-0.7cm}
\subfigure[]{
\centering \includegraphics[width=6cm]{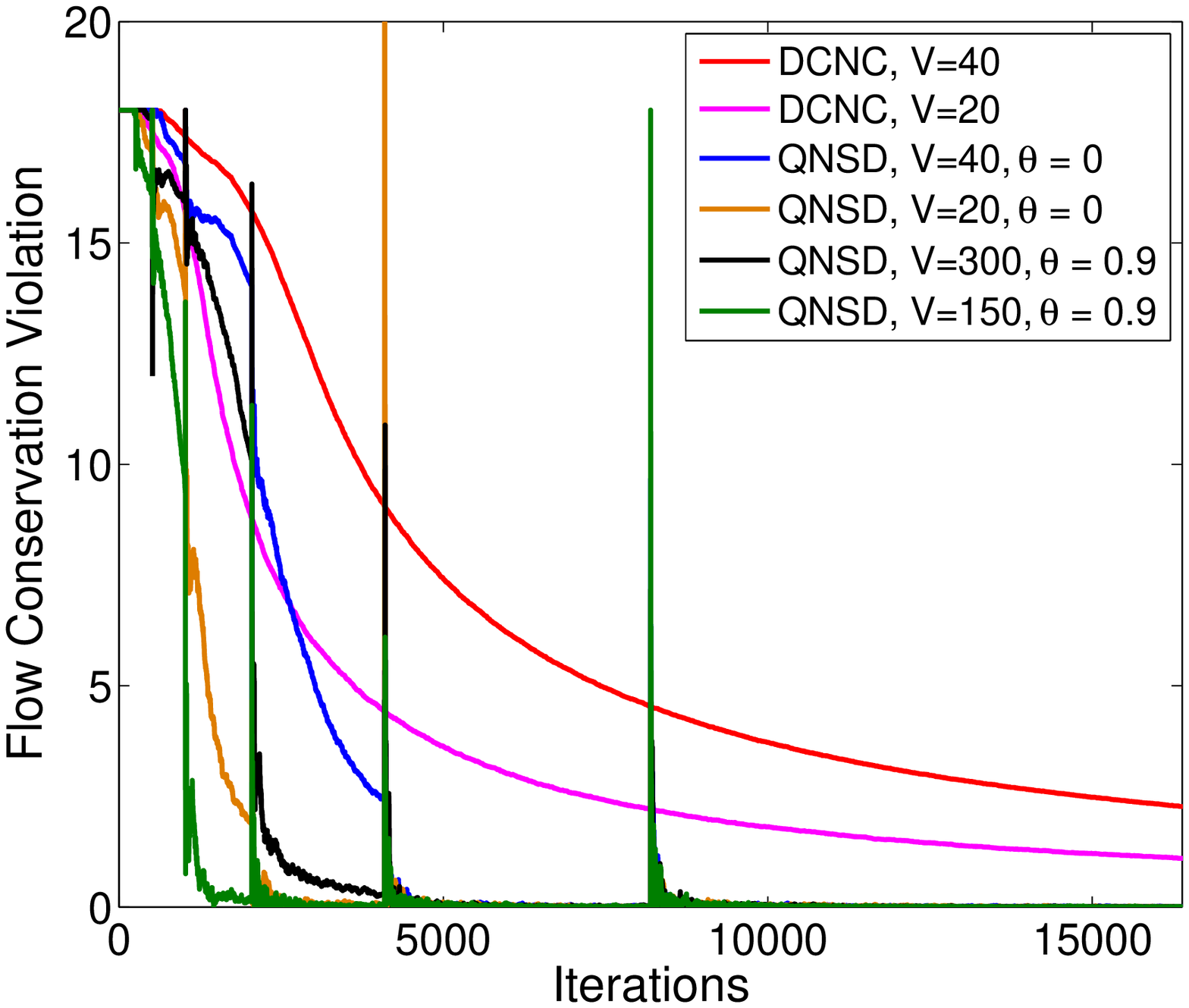}
\label{fig10100}
}
\hspace{-0.7cm}
\subfigure[]{
\centering \includegraphics[width=6cm]{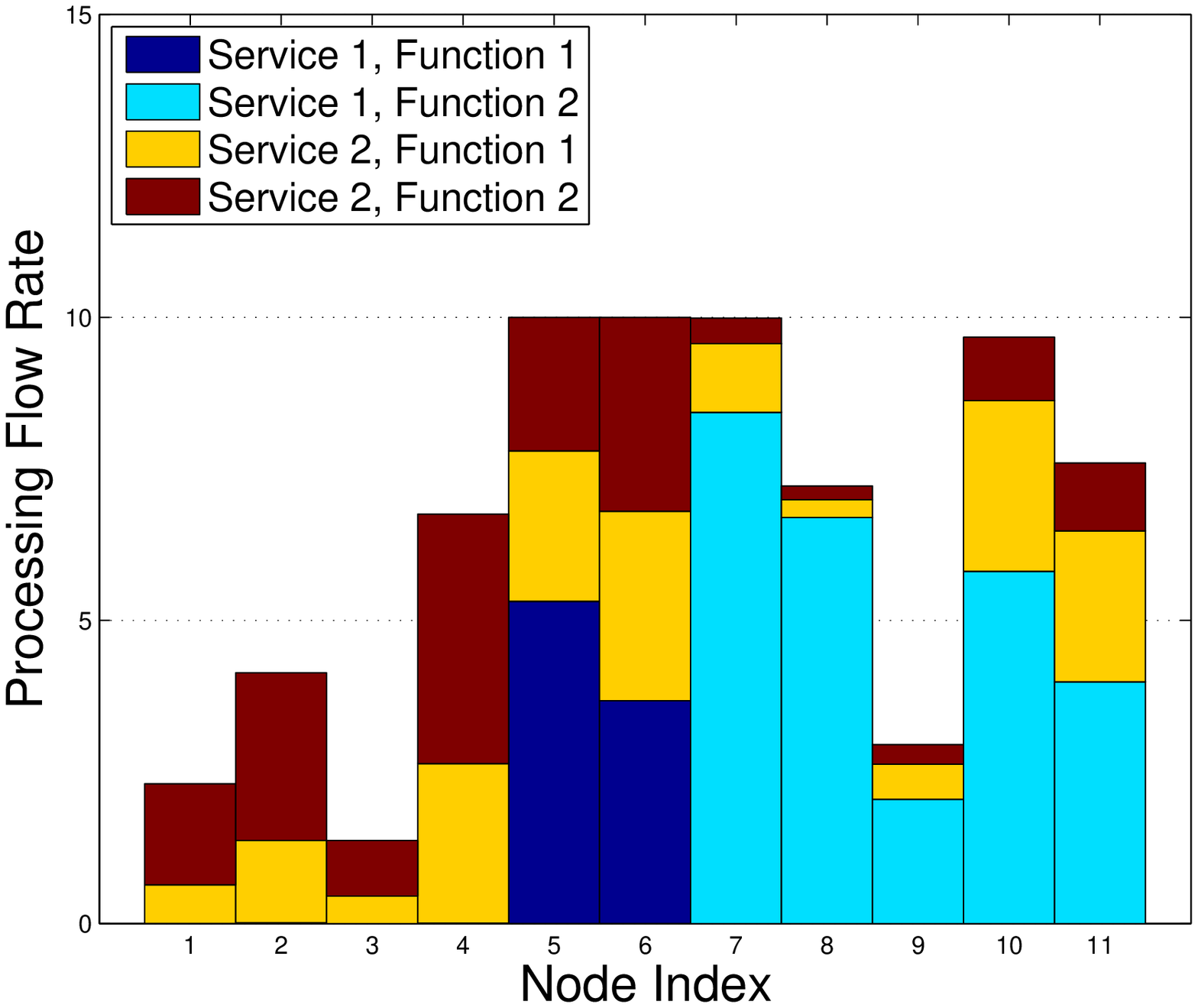}
\label{fig10200}
}
\caption{Performance of QNSD. a)~Evolution of total cost over algorithm iterations; b)~Evolution of flow conservation violation over algorithm iterations; c)~Processing resource allocation distribution across cloud network nodes after running QNSD with $V=300$ and $\theta=0.9$ for 15000 iterations.}
\vspace{-0.5cm}
\label{fig1}
\end{figure*}

\section{Simulation Results}
\label{sim}

In this section, we evaluate the performance of QNSD and C-QNSD in the context of Abilene US continental network, composed of $11$ nodes and $28$ directed links, as illustrated in Fig. \ref{abilene}. We assume each node and link is equipped with $10$ cloud resource units and $10$ network resource units, respectively. The cost per cloud resource unit is set to $1$ at nodes $5$ and $6$, and to $3$ at all other nodes. The cost per network resource unit is set to $1$ for all $28$ links.

\subsection{QNSD}

We first test the performance of QNSD.
We consider a scenario with $2$ services, each composed of $2$ functions. Function $(1,1)$ (service 1, function 1) has resource requirement $1$ resource unit per flow unit; while functions $(1,2)$, $(2,1)$, and $(2,2)$ require $3$, $2$, and $2$ resource units per flow unit, respectively.
We assume resource requirements do not change across cloud locations, and that all links require $1$ network resource unit per flow unit.
There are $6$ clients, represented  by destination nodes $\{1,2,4,7,10,11\}$ in Fig. \ref{abilene}. Each destination node in the west coast $\{1,2,4\}$ has each of the east coast nodes $\{11,10,7\}$ as source nodes, and viceversa,
resulting in a total of $18$ source-destination pairs.
We assume that each east coast client requests service $1$ and each west coast client requests service $2$, and that all input flows have size $1$ flow unit.

Fig. \ref{fig1}a shows the evolution of the cost function over the algorithm iterations. Recall that QNSD exhibits 3 main features: queue-length driven, truncated average computation, and first-order memory. In Fig. \ref{fig1}, we refer to QNSD without truncation and without memory as DCNC, as it resembles the evolution of the dynamic cloud network control algorithm in \cite{dcnc}. We use QNSD with $\theta=0$ to refer to QNSD with truncation, but without memory. And finally, QNSD with $\theta>0$ refers to the QNSD algorithm with both truncation and memory.
It is interesting to observe the improved convergence obtained when progressively including QNSD's key features. Observe how DCNC evolves the slowest and it is not able to reach the optimal objective function value within the $16000$ iterations shown. Decreasing the control parameter $V$ from $40$ to $20$ speeds up the convergence of DCNC, but yields a further-from-optimal approximation, not appreciated in the plot due to the slow convergence of DCNC. In fact, QNSD without truncation and memory can only guarantee a $O(1/\epsilon^2)$ convergence, which is also the convergence speed of the FPTAS for MCF in \cite{Garg07}.
On the other hand, when including truncation, QNSD is able to reach the optimal cost value of $149$ in  around $6000$ iterations, clearly illustrating the faster $O(1/\epsilon)$ convergence. The peaks exhibited by the curves of QNSD with truncation illustrate the reset of the average computations at the beginning of each iteration frame.
Note again that reducing $V$ further speeds up convergence at the expense of slightly increasing the optimality gap. 
Finally, when including the memory feature with a value of $\theta =0.9$, QNSD is able to converge to the optimal solution 
even faster, illustrating the conjectured $O(1/\sqrt\epsilon)$ speed-up from the momentum generated when combining gradient directions (see Section \ref{linear}). Decreasing $V$ again illustrates the speed-up v.s. optimality tradeoff. 

Fig. \ref{fig1}b shows the convergence of the violation of the flow conservation constraints. We can observe a similar behavior as in the cost convergence, with significant speed-ups when progressively adding truncation and memory to QNSD.

Finally, Fig. \ref{fig1}c shows the processing resource allocation distribution across cloud network nodes. As expected, most of the flow processing concentrates on the cheapest nodes $5$ and $6$. Note how function $(1,1)$, which has the lowest processing requirement ($1$ resource unit per flow unit) gets exclusively implemented in nodes $5$ and $6$, as it gets higher priority in QNSD's scheduling decisions. Functions $(2,1)$ and $(2,2)$, which require $2$ resource units per flow unit, share the remaining processing capacity at nodes $5$ and $6$. 
Finally, function $(1,2)$, with resource requirement $3$ resource units per flow unit, and following function $(1,1)$ in the service chain, gets distributed closer to the east coast nodes, destinations for service $1$.



\begin{figure}
        \includegraphics[height=5.3cm]{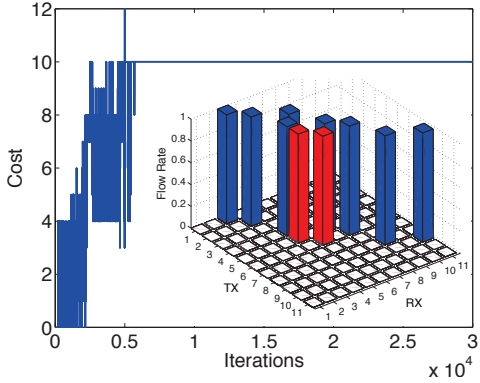}
        \centering{
        \caption{Cost evolution and flow distribution of the C-QNSD solution with input rate $1$ flow unit per client and control parameters $V=1000$, $\theta=0.9$.}
        \label{fig3}}
\end{figure}

\begin{figure}
        \includegraphics[height=5.3cm]{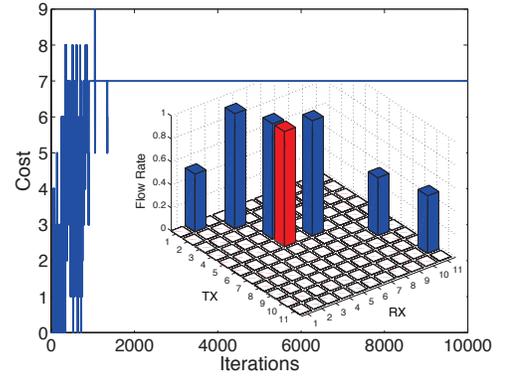}
        \centering{
        \caption{Cost evolution and flow distribution of the C-QNSD solution with input rate $0.5$ flow units per client and control parameters $V=100$, $\theta=0.9$.}
        \label{fig4}}
\end{figure}

\subsection{C-QNSD}

In order to test the performance of C-QNSD, we setup a scenario with $2$ 
$s$-$d$ pairs, $(1, 11)$ and $(2,7)$, both requesting one service composed of one function with resource requirement $1$ cloud resource per flow unit. We simulate the performance of C-QNSD for input rates $1$ flow unit and $0.5$ flow unit per client. Observe from Fig. \ref{fig3} that for input rate $1$, C-QNSD is able to converge to a solution of total cost $10$, in which
each client flow follows the shortest path and where the flow processing happens at nodes $5$ and $6$, respectively.
The 3D bar plot shows the flow distribution over the links (non-diagonal entries in blue) and nodes (diagonal entries in red) in the network.
Observe now the solution for input rate $0.5$ in Fig. \ref{fig4}. 
The flow of $s$-$d$ pair $(1,11)$ is now routed along the longer path $\{1,2,4,5,7,10,11\}$ in order to consolidate as much flow as possible on the activated resources for $s$-$d$ pair $(2,7)$. The flow processing of both services is now consolidated at node $5$, yielding an overall cost of $7$, which is in fact the optimal solution to the integer NSDP in this setting.
Note that if the two client flows were following the shortest path and separately getting processed at nodes $5$ and $6$, without exploiting the available resource consolidation opportunities, the total cost under integer resources would be $10$.

While preliminary,
our results show promise on the combined use of momentum information and conditional flow conservation constraints
for providing efficient solutions to the integer NSDP in practical network settings.

\section{Extensions}
\label{discussion}

While not included in this paper for ease of exposition, our model and algorithms can be extended to include:
\begin{itemize}

\item \emph{Function availability:} Limiting the availability of certain functions to a subset of cloud nodes can be modeled by setting the flow variables associated with function $(\phi,i)$ to zero, $f^{(d,\phi,i)}_{p(u),u}=0$, for all $d$ and for all cloud nodes $u$ in which function $(\phi,i)$ is not available.

\item \emph{Function flow scaling:} Capturing the fact that flows can change size as they go through certain functions can be modeled by letting $\xi^{(\phi,i)}$ denote the number of output flow units per input flow unit of function $(\phi,i)$, and modifying the service chaining constraints \eqref{chain} as $ \xi^{(\phi,i)} f_{p(u),u}^{(d,\phi,i)} = f_{u,p(u)}^{(d,\phi,i-1)}$.

\item \emph{Per-function resource costs:} If the cost function depends on the number of virtual resource units (\eg VMs) instead of on the number of physical resource units (\eg servers), then we can use $y^{(\phi,i)}_{uv}$ and $w^{(\phi,i)}_{uv}$ to denote the number of allocated resource units of function $(\phi,i)$ and the cost per resource unit of function $(\phi,i)$, respectively.

\item \emph{Nonlinear cost function:} In order to capture nonlinear cost effects such as \emph{economies of scale} in which the cost per resource unit decreases with the number of allocated resource units, the cost function can be modified as $\sum_{uv} y_{uv,k} w_{uv,k}$, with $y_{uv,k}\in\{0,1\}$ indicating the allocation of $k$ resource units, and $w_{uv,k}$ denoting the cost associated with the allocation of $k$ resource units. The capacity constraints in \eqref{cap} become $\sum_{(d,\phi,i)} \! f_{uv}^{(d,\phi,i)} r^{(\phi,i+1)}_{uv} \! \leq k\, y_{uv,k} \! \leq  c_{uv} $. 

\end{itemize}

\section{Conclusions}
\label{conclusions}

We have formulated the NFV service distribution problem (NSDP) as a multi-commodity-chain network design problem on a cloud-augmented graph. We have shown that under load-proportional costs, the resulting fractional NSDP becomes a multi-commodity-chain network flow problem that admits optimal polynomial time solutions, and have designed QNSD, a queue-length based iterative algorithm that provides an $O(\epsilon)$ approximation in time $O(1/\epsilon)$.
We further conjectured that, by exploiting the momentum obtained when combining differential queue backlogs across consecutive iterations, QNSD converges in time $O(1/\sqrt{\epsilon})$, and illustrated it via simulations.
We then addressed the case in which resource costs are a function of the integer number of allocated resources. We showed that the integer NSDP is NP-Hard by reduction from MCND and designed C-QNSD, a heuristic algorithm that constrains the evolution of QNSD
to effectively consolidate flows into a limited number of active resources.

\section{Appendix: Proof of Theorem \ref{thm1}}
\label{Appedix:DCNC}


Let
\begin{eqnarray}
\label{eq:dinamicvect}
 \Qm(t+1)  =  \left [\Qm(t) + \Am \fv (t)  \right]^+
\end{eqnarray}
denote the matrix form of the QNSD queuing dynamics given by  \eqref{queue_dynamics}, where  $\fv (t)$, $\Qm(t)$, and $\bf A$ denote the flow vector in iteration $t$, the queue-length vector in iteration $t$, and the matrix of flow coefficients, 
respectively.

Let
\begin{eqnarray}
\label{dnsdp2}
 Z(\Qm)  \triangleq  \inf_{[\yv, \fv]\in \mathcal X }  \left  \{
  V \wv^\dagger \yv+  (\Am \fv )^\dagger \Qm \right \}
  \end{eqnarray}
denote the dual function of the fractional NSDP weighted by the control parameter $V$,
where $\wv$ is the resource cost vector and $\mathcal X$ is the set of feasible solutions to the fractional NSDP.


Let  $\mathcal H^* \triangleq \{ \Qm^* \! :   \, \, \Qm^* =\arg\max_{\Qm}  Z(\Qm) \}$ denote the set of  all optimizers of $ Z(\Qm)$.

Let  $\mathcal H_\gamma \triangleq  \left \{ \Qm:  \sup_{ \Qm^* \in \mathcal H^*} \{ \| \Qm - \Qm^* \| \} \leq \gamma  \right  \}$ denote
the set of queue-length vectors that are at most $\gamma$-away from any point in $\mathcal H^*$. 
We further assume that the value of $\gamma$ is independent of the control parameter $V$.\footnote{While we conjecture that this assumption is not needed to prove Theorem \ref{thm1}, it is technically needed for the type of proof used in this paper.} 


Since $Z(\Qm)$ is a piecewise linear concave  function of  $ \Qm$,
the locally polyhedron property is satisfied,
\ie:
 $\forall \Qm \notin  \mathcal  H_\gamma$ and $\forall \Qm^*\in {\mathcal H}^*$, there  exists a $L_\gamma>0 $ such that
 \begin{eqnarray}
 Z(\Qm^*)-  Z(\Qm) \geq
L_\gamma \| \Qm^*  - \Qm \|.
\label{eq:poly}
 \end{eqnarray}

We then denote the set of queue-length vectors that are $D$-away from $H^*$ as
$$\mathcal H_D \triangleq \left \{ \Qm:  \sup_{ \Qm^* \in \mathcal H^*} \{ \| \Qm - \Qm^* \| \} \leq D  \right  \},$$
with $D \triangleq \max\left \{ \frac{B }{L_\gamma} - \frac{L_\gamma}{4},  \frac{L_\gamma}{2}, \gamma\right\}$
,where $B$ is an upper bound for $\|\Am\fv \|^2$ defined as
\begin{equation}
B \triangleq \sum\limits_u {\left[ {{{\left( {\sum\limits_{v \in {\delta ^ - }\left( u \right)} {\frac{{{c_{vu}}}}{{\mathop {\min }\limits_{(\phi ,i)} r_{vu}^{\left( {\phi ,i} \right)}}}} } \right)}^2} + {{\left( {\sum\limits_{v \in {\delta ^ + }\left( u \right)} {\frac{{{c_{uv}}}}{{\mathop {\min }\limits_{(\phi ,i)} r_{uv}^{\left( {\phi ,i} \right)}}}} } \right)}^2}} \right]}. \nonumber
\end{equation}

We now state and prove the following lemmas, which are used to prove Theorem \ref{thm1}.

\begin{lem}
\label{lemma0}
Under QNSD, if $ \Qm(t) \notin  \mathcal  H_D$, we  have that, for all $\Qm^*\in {\mathcal H}^*$,
$$\| \Qm(t+1) - \Qm^*\|  \leq \| \Qm(t) - \Qm^*\| -\frac{L_\gamma}{2}.$$

\begin{proof}
From \eqref{eq:dinamicvect}, it follows that
\begin{eqnarray}
\| \Qm(t+1) - \Qm^* \|^2 \! \!\!\! &=& \! \!\!\! \| \left [\Qm(t) + \Am \fv (t)   \right]^+   - \Qm^* \|^2 \nonumber \
\\\! \!\!\! &\leq & \! \!\!\!  \| \Qm(t) + \Am \fv (t)       - \Qm^* \|^2  \nonumber  \\
\! \!\!\! &\leq &\! \!\!\!   \| \Qm(t) - \Qm^* \|^2 +  B  \nonumber \\
&& \!\!\!\!\!  + 2 (\Qm(t) - \Qm^*)^\dagger \Am \fv (t)  \label{eq: bound1}
\end{eqnarray}
where  \eqref{eq: bound1} is due to the fact that $\| \Am \fv (t)  \|^2 $ $\leq $ $B$,
as shown in the following:
\begin{align}
&\|\Am\fv \|^2 \nonumber\\
=& \sum\limits_u {\sum\limits_{(d,\phi ,i)} {\left( {\sum\limits_{v \in {\delta ^{ - }}\left( u \right)} {f_{vu}^{(d,\phi ,i)}}  - \sum\limits_{v \in {\delta ^ + }\left( u \right)} {f_{uv}^{(d,\phi ,i)}} } \right)^2} }\nonumber\\
\le &\sum\limits_u {\sum\limits_{(d,\phi ,i)} {\left[ {{{\left( {\sum\limits_{v \in {\delta ^ - }\left( u \right)} {f_{vu}^{(d,\phi ,i)}} } \right)}^2} + {{\left( {\sum\limits_{v \in {\delta ^ + }\left( u \right)} {f_{uv}^{(d,\phi ,i)}} } \right)}^2}} \right]} }  \nonumber\\
\le &\sum\limits_u {\left[ {{{\left( {\sum\limits_{v \in {\delta ^ - }\left( u \right)} {\sum\limits_{(d,\phi ,i)} {f_{vu}^{(d,\phi ,i)}} } } \right)}^2}\!\! +\!\! {{\left( {\sum\limits_{v \in {\delta ^ + }\left( u \right)} {\sum\limits_{(d,\phi ,i)} {f_{uv}^{(d,\phi ,i)}} } } \right)}^2}} \right]} \nonumber\\
\le& \sum\limits_u {\left[ {{{\left( {\sum\limits_{v \in {\delta ^ - }\left( u \right)} {\frac{{{c_{vu}}}}{{\mathop {\min }\limits_{(\phi ,i)} r_{vu}^{\left( {\phi ,i} \right)}}}} } \right)}^2}
+ {{\left( {\sum\limits_{v \in {\delta ^ + }\left( u \right)}{\frac{{{c_{uv}}}}{{\mathop {\min }\limits_{(\phi ,i)} r_{uv}^{\left( {\phi ,i} \right)}}}} } \right)}^2}} \right]}\nonumber\\
\triangleq &B.
\end{align}


From \eqref{dnsdp2},  we have that, for all $\Qm^*$$\in$${\mathcal H}^*$,
\begin{align}
\label{dnsdp3}
Z(\Qm(t)) -  Z(\Qm^*) =& \inf_{[\yv', \fv']\in \mathcal X }  \left  \{
  V \wv^\dagger \yv'+  (\Am \fv'   )^\dagger \Qm(t) \right \}\notag\\
  &- \inf_{[\yv', \fv']\in \mathcal X }  \left  \{
  V \wv^\dagger \yv'+  (\Am \fv'   )^\dagger \Qm^* \right \}\notag\\
  \geq &\ \ V\wv^\dagger \yv(t) + (\Am \fv(t)   )^\dagger \Qm(t)\notag\\
  & - V\wv^\dagger \yv(t) + (\Am \fv(t)   )^\dagger \Qm^*\notag\\
  =&\ \ (\Am \fv (t)   )^\dagger \left( \Qm(t) -\Qm^* \right).
\end{align}

Using \eqref{dnsdp3} in \eqref{eq: bound1}, 
it follows that
\begin{eqnarray}
\label{dnsdp33}
\| \Qm(t+1) - \Qm^* \|^2 \! \!\!\!
 &\leq &\! \!\!\!   \| \Qm(t) - \Qm^* \|^2 +  B  \nonumber \\
&& \!\!\!\!\! - 2 \left ( Z(\Qm^*)-  Z(\Qm(t)) \right).
\end{eqnarray}

From the definition of $\mathcal  H_D$, 
it is immediate to show that for any $\Qm(t) \notin \mathcal  H_D$ and any $\Qm^* \in \mathcal H^*$,
 \begin{eqnarray} \frac{L_\gamma^2}{4}  -  L_\gamma\| \Qm^*- \Qm(t)\| \geq  B   - 2 L_\gamma\| \Qm^*- \Qm(t)\|.
 \label{dnsdp43}
\end{eqnarray}

Now, using  \eqref {eq:poly}  and \eqref{dnsdp43} in \eqref{dnsdp33},
\begin{align}
\| \Qm(t+1) \! - \Qm^* \|^2
 &\! \leq  \| \Qm(t) - \Qm^* \|^2 \! +  B   - 2 L_\gamma\| \Qm^* \! - \Qm(t)\| \notag \\
 &\! \leq \| \Qm(t) - \Qm^* \|^2 \! +  \frac{L_\gamma^2}{4} -  L_\gamma \| \Qm^* \! - \Qm(t)\| \nonumber\\
&\!  =  \left(\| \Qm(t) - \Qm^* \| - \frac{L_\gamma}{2}\right)^2.
\end{align}

Hence, for any $\Qm(t)\notin \mathcal H_D$, the queue length evolution is regulated by
\begin{eqnarray}
\| \Qm(t+1) - \Qm^* \| \! \!\!\!
 &\leq &\! \!\!\!    \| \Qm(t) - \Qm^* \| -  \frac{L_\gamma}{2}.
 \label{reduction}
\end{eqnarray}

\end{proof}

\end{lem}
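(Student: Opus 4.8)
The plan is a standard ``drift toward a polyhedral optimizer'' argument carried out in the $\ell_2$ metric. First I would track the squared distance $\|\Qm(t+1)-\Qm^*\|^2$. Since the clipping $[\cdot]^+$ in \eqref{eq:dinamicvect} is the Euclidean projection onto the nonnegative orthant, every $\Qm^*\in\mathcal H^*$ lies in that orthant (it is an optimal Lagrange multiplier), and projections are non-expansive, the dynamics give
\begin{align}
\|\Qm(t+1)-\Qm^*\|^2 \le \|\Qm(t)-\Qm^*\|^2 + \|\Am\fv(t)\|^2 + 2\big(\Qm(t)-\Qm^*\big)^\dagger\Am\fv(t). \notag
\end{align}
The term $\|\Am\fv(t)\|^2$ is then bounded by the constant $B$ defined above: in each iteration QNSD pushes at most $c_{uv}/\min_{(\phi,i)}r^{(\phi,i)}_{uv}$ flow units across each edge $(u,v)$, so a triangle-inequality expansion of the squared net flow at each node yields $\|\Am\fv(t)\|^2\le B$.

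The crux is turning the cross term into a large negative quantity. I would use the fact that QNSD's decoupled per-link/per-node max-weight rules, together with the bang-bang choice of $y_{uv}$, are exactly a minimizer of the drift-plus-penalty functional over the per-slot feasible set $\mathcal X$, so $[\yv(t),\fv(t)]$ attains the infimum defining $Z(\Qm(t))$. Evaluating the infimum that defines $Z(\Qm^*)$ at this same feasible point and subtracting yields $Z(\Qm(t))-Z(\Qm^*)\ge (\Am\fv(t))^\dagger(\Qm(t)-\Qm^*)$, hence
\begin{align}
\|\Qm(t+1)-\Qm^*\|^2 \le \|\Qm(t)-\Qm^*\|^2 + B - 2\big(Z(\Qm^*)-Z(\Qm(t))\big). \notag
\end{align}

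I would then invoke the locally polyhedron property \eqref{eq:poly} of the piecewise-linear concave $Z$: because $\mathcal H_D\supseteq\mathcal H_\gamma$, $\Qm(t)\notin\mathcal H_D$ forces $Z(\Qm^*)-Z(\Qm(t))\ge L_\gamma\|\Qm^*-\Qm(t)\|$, so the bound becomes $\|\Qm(t)-\Qm^*\|^2 + B - 2L_\gamma\|\Qm^*-\Qm(t)\|$. The choice $D\ge \frac{B}{L_\gamma}-\frac{L_\gamma}{4}$ makes $B-2L_\gamma\|\Qm^*-\Qm(t)\|\le \frac{L_\gamma^2}{4}-L_\gamma\|\Qm^*-\Qm(t)\|$ whenever $\|\Qm^*-\Qm(t)\|>D$, which collapses the right-hand side into the perfect square $\big(\|\Qm(t)-\Qm^*\|-\tfrac{L_\gamma}{2}\big)^2$; taking square roots (legitimate since $D\ge L_\gamma/2$ keeps the bracket nonnegative) gives the claim.

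I expect the main obstacle to be the middle step: certifying rigorously that the separable max-weight decisions of QNSD really do minimize the coupled functional $V\wv^\dagger\yv+(\Am\fv)^\dagger\Qm(t)$ over $\mathcal X$ — i.e. that the objective genuinely decouples across edges and commodities and that a linear objective over the capacity box is optimized at a vertex — together with establishing the locally polyhedron inequality \eqref{eq:poly} for this particular dual $Z$ and the accompanying assumption that $\gamma$ (hence $L_\gamma$) can be taken independent of $V$. The remaining pieces are elementary bookkeeping with the thresholds appearing in the definition of $D$.
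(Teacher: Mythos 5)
Your proposal is correct and follows essentially the same route as the paper's proof: non-expansiveness of the $[\cdot]^+$ projection, the bound $\|\Am\fv(t)\|^2\le B$, the observation that QNSD's max-weight decisions attain the infimum defining $Z(\Qm(t))$ so the cross term is bounded by $-2\bigl(Z(\Qm^*)-Z(\Qm(t))\bigr)$, and finally the locally polyhedron inequality together with the definition of $D$ to complete the square. The obstacles you flag (separability of the max-weight step and the $V$-independence of $\gamma$) are exactly the points the paper handles by construction of the algorithm and by explicit assumption, respectively.
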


\begin{lem}
\label{lemma1}
Assuming the Slater condition holds, and letting
\begin{align}
&\mathcal R_D \triangleq \left  \{ \Qm:  \sup_{\Qm^* \in \mathcal H^*} \| \Qm - \Qm^*\|  \leq D + \sqrt{B} \right \}, \notag\\
& \tau_{\mathcal R_D} \triangleq \inf \{ t \geq 0: \Qm(t) \in \mathcal R_D \},
\end{align}
then,  for all $t \geq \tau_{\mathcal R_D}$, $\Qm(t) \in \mathcal R_D $.

\begin{proof}
We follow by induction. By the definition of $\tau_{\mathcal R_D}$,
we have that $\Qm(t)\in \mathcal R_D$ for $t=\tau_{\mathcal R_D}$.
Now, assume $\Qm(t)\in \mathcal R_D$ holds in iteration $t\geq \tau_{\mathcal R_D}$.
Then, in iteration $t+1$, the following holds: if $\Qm(t) \notin \mathcal H_D$, according to Lemma \ref{lemma0}, we have that $\left|\Qm(t+1)-\Qm^*\right|\leq \left|\Qm(t)-\Qm^*\right| - \frac{L_\gamma}{2}\leq D + \sqrt{B} - \frac{L_\gamma}{2}$; on other hand, if $\Qm(\tau_{\mathcal R_D}) \in \mathcal H_D$, it follows that $\left|\Qm(t+1)-\Qm^*\right|\leq \left|\Qm(t)-\Qm^*\right| + \left|\Qm(t+1)-\Qm(t)\right|\leq D + \sqrt{B}$. Hence, in either case, $\Qm(t+1)\in \mathcal R_D$.
\end{proof}
%
%
%
\end{lem}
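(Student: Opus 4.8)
The plan is to prove the invariance by induction on the iteration index $t \ge \tau_{\mathcal R_D}$. The base case is immediate: at $t=\tau_{\mathcal R_D}$ we have $\Qm(t)\in\mathcal R_D$ by the definition of $\tau_{\mathcal R_D}$ as the first hitting time of $\mathcal R_D$. The Slater condition enters only indirectly: via strong duality it guarantees that the dual optimal set $\mathcal H^*$ is nonempty and bounded, so that $\mathcal R_D$ is a well-defined bounded set and the statement is not vacuous.

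For the inductive step, I would assume $\Qm(t)\in\mathcal R_D$ for some $t\ge\tau_{\mathcal R_D}$, i.e. $\|\Qm(t)-\Qm^*\|\le D+\sqrt B$ for all $\Qm^*\in\mathcal H^*$, and then split into two exhaustive cases according to whether $\Qm(t)$ lies inside the inner region $\mathcal H_D$. If $\Qm(t)\notin\mathcal H_D$, Lemma \ref{lemma0} gives the strict one-step contraction $\|\Qm(t+1)-\Qm^*\|\le\|\Qm(t)-\Qm^*\|-L_\gamma/2\le D+\sqrt B$ for every $\Qm^*\in\mathcal H^*$, so $\Qm(t+1)\in\mathcal R_D$. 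If instead $\Qm(t)\in\mathcal H_D$, then $\|\Qm(t)-\Qm^*\|\le D$, and using the queuing dynamics \eqref{eq:dinamicvect} together with the non-expansiveness of the projection $[\cdot]^+$ onto the nonnegative orthant (valid since $\Qm(t)\ge 0$), the one-step displacement obeys $\|\Qm(t+1)-\Qm(t)\|=\|[\Qm(t)+\Am\fv(t)]^+-\Qm(t)\|\le\|\Am\fv(t)\|\le\sqrt B$, where the last inequality is the bound $\|\Am\fv\|^2\le B$ established inside the proof of Lemma \ref{lemma0} (it uses only the capacity constraints, hence applies to whatever flow decision QNSD makes in iteration $t$). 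The triangle inequality then yields $\|\Qm(t+1)-\Qm^*\|\le\|\Qm(t)-\Qm^*\|+\|\Qm(t+1)-\Qm(t)\|\le D+\sqrt B$, so again $\Qm(t+1)\in\mathcal R_D$. Since the two cases cover all possibilities, $\Qm(t+1)\in\mathcal R_D$, closing the induction.

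The argument is essentially bookkeeping, and I do not expect a genuine obstacle. The one point worth getting right is the size of the extra margin $\sqrt B$ in the radius of $\mathcal R_D$: it must be at least the worst-case one-step displacement $\|\Am\fv(t)\|$, so that a step originating anywhere in $\mathcal H_D$ cannot leave $\mathcal R_D$; because the definition of $\mathcal R_D$ uses exactly $\sqrt B$, the two cases patch together cleanly. A secondary care point is that the contraction of Lemma \ref{lemma0} holds for all $\Qm^*\in\mathcal H^*$ simultaneously, which is precisely what is needed to control the supremum appearing in the definition of $\mathcal R_D$.
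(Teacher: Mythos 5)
Your proposal is correct and follows essentially the same argument as the paper's proof: induction from the hitting time, with Lemma \ref{lemma0} handling the case $\Qm(t)\notin\mathcal H_D$ and a triangle inequality plus the one-step displacement bound $\|\Qm(t+1)-\Qm(t)\|\le\sqrt B$ handling the case $\Qm(t)\in\mathcal H_D$. Your explicit justification of the displacement bound via non-expansiveness of $[\cdot]^+$ is a detail the paper leaves implicit, but the route is the same.
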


\bigskip

\begin{lem}
\label{Lemma:Bound}
Letting $\displaystyle h^{\sf max}$$ =$$ \displaystyle \max_{\yv } \left\{\wv^\dagger \yv\right \}$,
if the Slater condition holds, the queue-length vector of QNSD satisfies 
\begin{equation}
\|\Qm(t)\| \leq  \frac{B\left/2\right. +  V h^{\sf max} }{\kappa} + \sqrt{B}, \qquad \forall t >0,
\end{equation}
where $\kappa$ is a positive number satisfying $\|\Qm(t+1)\|^2-\|\Qm(t)\|^2\leq B + Vh^{\sf max} - \kappa\|\Qm(t)\|$.
\begin{proof}
Starting from  the queuing dynamics in \eqref{eq:dinamicvect}, squaring both sides of the equality,
recalling that  $\|\Am  \fv(t)   \|^2 \leq B$, and
 adding $V \wv^\dagger \yv(t)$ on both sides, after algebraic manipulation, we get 
 \begin{eqnarray}
&& \frac{1}{2}\left[ \| \Qm(t+1) \|^2 -  \|\Qm(t)\|^2\right]  +V \wv^\dagger\yv(t)  \nonumber \\
&& \qquad\,\,\,\, \leq   \frac{B}{2}
 + V \wv^\dagger \yv(t) +  \Qm(t)^\dagger \Am \fv (t).
 \label{eq:stronzi}
\end{eqnarray}
As described in Section \ref{QNSD_algorithnm}, QNSD computes the resource and flow vectors $[\yv(t),\fv(t)]$ in iteration $t$ as
\begin{eqnarray}
\label{defalg}
[\yv(t),\fv(t)]  =  \arg\inf_{[\yv', \fv']\in \mathcal X }  \left  \{V \wv^\dagger \yv' +  (\Am \fv'    )^\dagger \Qm(t) \right \},
\label{defalg1}
\end{eqnarray}
which  implies that, 
for any feasible $[\hat \yv, \hat \fv]$,
 \begin{eqnarray}
 &&
 V \wv^\dagger \yv(t)  +    \Qm(t)^\dagger  (\Am \fv (t)   ) \nonumber \\
 &&  \qquad\qquad\quad \leq
V \wv^\dagger  \hat \yv  +    \Qm(t)^\dagger \Am \hat \fv.
\label{rotta}
\end{eqnarray}
Based on the Slater condition, there exists a positive number $\kappa$ and a $[\hat \yv, \hat \fv]$, such that $\Am \hat\fv \le -{\kappa} \bf 1$, and it follows that
\begin{equation}
\label{eq_slater}
V \wv^\dagger \yv(t)  +    \Qm(t)^\dagger \Am \fv (t) \le Vh^{\sf max} -\kappa\|\Qm(t)\|.
\end{equation}

We now consider two cases:

If $\|\Qm(t)\|\geq {{(B\left/2\right. + V{h^{\sf max}})}\left/ \kappa\right.} $, using \eqref{eq_slater} in \eqref{eq:stronzi} yields
\begin{align}
&\frac{1}{2}\left[\| \Qm(t+1) \|^2 -  \|\Qm(t)\|^2\right]\nonumber\\
&\ \ \ \ \leq   \frac{B}{2} + Vh^{\sf max} - V \wv^\dagger\yv(t) -\kappa\|\Qm(t)\|\le 0.
\end{align}

If $\|\Qm(t)\|<{{(B\left/2\right. + V{h^{\sf max}})} \mathord{\left/
 {\vphantom {{(B\left/2\right. + V{h^{\sf max}})} \kappa }} \right.
 \kern-\nulldelimiterspace} \kappa }$, then
 \begin{align}
 \|\Qm(t+1)\|&\le \|\Qm(t)\| + \|\Am\fv(t) \| \nonumber\\
 &\le \frac{B\left/2\right.+Vh^{\sf max}}{\kappa} + \sqrt{B}.
 \end{align}

Hence, in either case, we can upper bound $\|\Qm(t)\|$ by ${{(B\left/2\right. + V{h^{\sf max}})} \mathord{\left/
 {\vphantom {{(B\left/2\right. + V{h^{\sf max}})} \kappa }} \right.
 \kern-\nulldelimiterspace} \kappa } + \sqrt B $ for all $t>0$.

\end{proof}

\end{lem}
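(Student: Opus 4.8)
The plan is to establish the bound by a Lyapunov-drift argument on the squared queue length, using the fact that QNSD is the per-iteration minimizer \eqref{defalg1} of the weighted objective together with the Slater point guaranteed by feasibility. I would first derive a one-step drift inequality directly from the matrix-form dynamics \eqref{eq:dinamicvect}. Squaring both sides and using the non-expansiveness of the projection onto the nonnegative orthant (so that $\|[\mathbf{x}]^+\|^2 \le \|\mathbf{x}\|^2$), together with the uniform bound $\|\Am\fv(t)\|^2 \le B$ already established in the proof of Lemma \ref{lemma0}, gives
\[
\tfrac{1}{2}\left(\|\Qm(t+1)\|^2 - \|\Qm(t)\|^2\right) \le \tfrac{B}{2} + \Qm(t)^\dagger \Am\fv(t).
\]
This reduces the task to controlling the cross term $\Qm(t)^\dagger\Am\fv(t)$, which I would handle next.

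The key step is to bound that cross term using the defining optimality of QNSD. Since $[\yv(t),\fv(t)]$ minimizes $V\wv^\dagger\yv' + (\Am\fv')^\dagger\Qm(t)$ over the feasible set $\mathcal X$, comparing against any feasible $[\hat\yv,\hat\fv]$ yields $V\wv^\dagger\yv(t) + \Qm(t)^\dagger\Am\fv(t) \le V\wv^\dagger\hat\yv + \Qm(t)^\dagger\Am\hat\fv$. Here I invoke the Slater assumption, which supplies a feasible point with strictly negative slack, $\Am\hat\fv \le -\kappa\,\mathbf{1}$ for some $\kappa>0$. Because every queue backlog is nonnegative, $\Qm(t)^\dagger\mathbf{1} = \|\Qm(t)\|_1 \ge \|\Qm(t)\|$, so $\Qm(t)^\dagger\Am\hat\fv \le -\kappa\|\Qm(t)\|$; combining this with $\wv^\dagger\hat\yv \le h^{\sf max}$ produces the drift relation $V\wv^\dagger\yv(t) + \Qm(t)^\dagger\Am\fv(t) \le Vh^{\sf max} - \kappa\|\Qm(t)\|$. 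Substituting this into the half-drift inequality, adding $V\wv^\dagger\yv(t)$ to both sides, and using $\wv^\dagger\yv(t)\ge 0$ gives, after routine rearrangement, exactly the stated property $\|\Qm(t+1)\|^2 - \|\Qm(t)\|^2 \le B + Vh^{\sf max} - \kappa\|\Qm(t)\|$ that characterizes $\kappa$.

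Finally I would close the argument by induction on $t$ via a two-regime analysis of this drift, starting from $\Qm(0)=0$. Whenever $\|\Qm(t)\| \ge (B/2 + Vh^{\sf max})/\kappa$, the right-hand side of the drift is nonpositive, so $\|\Qm(t+1)\| \le \|\Qm(t)\|$ and the norm cannot grow past its current (already bounded) value; whenever $\|\Qm(t)\|$ lies below that threshold, a single update can increase the norm by at most $\|\Am\fv(t)\| \le \sqrt{B}$, keeping it below $(B/2 + Vh^{\sf max})/\kappa + \sqrt{B}$. The dichotomy therefore preserves the claimed bound across every iteration. The step I expect to be the main obstacle is the passage from the componentwise Slater slack to the $\kappa$-drift in the $\ell_2$ norm: this relies essentially on $\|\Qm\|_1 \ge \|\Qm\|$, valid only because the actual queues are nonnegative, and on the Slater point $[\hat\yv,\hat\fv]$ being independent of the current backlog, so that a single $\kappa$ controls the drift uniformly in $t$. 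Everything else is elementary algebra on the squared-norm recursion.
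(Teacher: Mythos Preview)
Your proposal is correct and follows essentially the same route as the paper: derive the Lyapunov half-drift from \eqref{eq:dinamicvect}, use the QNSD optimality \eqref{defalg1} together with the Slater point to obtain the $\kappa$-drift bound \eqref{eq_slater}, and close by the same two-regime case analysis. You are in fact slightly more explicit than the paper in justifying $\Qm(t)^\dagger\Am\hat\fv \le -\kappa\|\Qm(t)\|$ via $\|\Qm\|_1 \ge \|\Qm\|_2$ for nonnegative queues and in framing the final step as an induction from $\Qm(0)=0$.
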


\bigskip
We now proceed to  prove Theorem \ref{thm1}.
Using \eqref{rotta} and evaluating the right-hand side of \eqref{eq:stronzi}
at the optimal solution  $[\yv^{\sf opt}$, $ \fv^{\sf opt}]$, we obtain
\begin{eqnarray}
\frac{1}{2}  \!\!\!\!\!\!\!&&  \!\!\!\!\!\!
\left (\| \Qm(t+1) \|^2  -  \|\Qm(t)\|^2 \right)   +V \wv^\dagger \yv  \nonumber \\
&&\leq \frac{B}{2} +  V \wv^\dagger \yv^{\sf opt}   +  \Qm(t)^\dagger \Am \fv^{\sf opt}(t)
\nonumber \\
&&=
 \frac{B}{2} +  V \wv^\dagger \yv^{\sf opt},
 \label{lyap46}
\end{eqnarray}
where the last equality follows from $\Am \fv^{\sf opt}(t)  = \bf 0$.
Now, denoting $h^{\sf opt}=  \wv^\dagger \yv^{\sf opt}$,
from  \eqref{lyap46}, we have that
\begin{eqnarray}
\wv^\dagger \yv(t)  - h^{\sf opt}\leq
 \frac{B}{2V }  -
  \frac{1}{2V }
\left (\| \Qm(t+1) \|^2  -  \|\Qm(t)\|^2 \right).
 \label{lyap00}
\end{eqnarray}
Next, taking the average  of  \eqref{lyap00}  over the $\ztr$-th iteration frame $[t_\ztr, t_\ztr+\Delta_{t_\ztr}]$,
we obtain
\begin{eqnarray}
\frac{1}{\Delta_{t_\ztr}+1} \!\!\!\!\!\!&& \!\!\!\!\!\!\sum_{\tau=t_{\ztr}}^{t_\ztr+\Delta_{t_\ztr}}\wv^\dagger \yv(\tau)  -  h^{\sf opt}
\nonumber \\
 &&\leq   \frac{B}{2V }  +
  \frac{\left ( \|\Qm(t_\ztr)\|^2 -\| \Qm(t_\ztr+\Delta_{t_\ztr}) \|^2   \right)}{2V(\Delta_{t_\ztr}+1)  }
 \nonumber
 \label{lyap}
\end{eqnarray}
Next, we use the following two properties:
\begin{itemize}
\item {\bf R1:}   $\tau_{\mathcal R_D}=O\left({h^{\sf max}V}\left/{L_\gamma\kappa}\right.\right)=O(mV)$.
\item {\bf R2:}  if $t_\ztr\geq\tau_{\mathcal R_D}$ then $\|\Qm(t_\ztr)\|^2 -\| \Qm(t_\ztr+\Delta_{t_\ztr}) \|^2=O ((D+\sqrt{B})h^{\sf max}V\left/L_{\gamma}\kappa\right.)=O(m^2V)$.
\end {itemize}

To prove {\bf R1}, according to Lemma \ref{lemma0}, we first have $\tau_{\mathcal R_D}\le \left\lceil {{{2\left\| {\Qm^*-\Qm(0)} \right\|}}\left/{{{L_\gamma }}}\right.} \right\rceil=\left\lceil {{{2\left\| {\Qm^*} \right\|}}\left/{{{L_\gamma }}}\right.} \right\rceil$. Moreover, from Lemma \ref{Lemma:Bound}, note that
\begin{align}
\|\Qm^*\|&\le \|\Qm(\tau_{\mathcal R_D})\| + \|\Qm^*-\Qm(\tau_{\mathcal R_D})\| \nonumber\\
&\le \frac{{B\left/2\right. + V{h^{\sf max}}}}{\kappa } + \sqrt{B} + D.
\end{align}
Hence, we have
\begin{equation}
\label{eq: tau}
\tau_{\mathcal R_D}\le \left\lceil \frac{2}{L_{\gamma}}\left(\frac{{B\left/2\right. + V{h^{\sf max}}}}{\kappa } + \sqrt{B} + D\ \right)\right\rceil,
\end{equation}
and therefore we can write $\tau_{\mathcal R_D}$ as
\begin{equation}
\label{tau_R_D}
\tau_{\mathcal R_D} = O\left(\frac{h^{\sf max}V}{L_\gamma\kappa}\right) = O(mV).
\end{equation}
The above relation holds true because $h^{\sf max}\le mh_0^{\sf max}=O(m)$, where $h_0^{\sf max}$ is the maximum per-node-cost.

To prove {\bf R2}, according to Lemma \ref{lemma1}, if $t_j\ge \tau_{\mathcal R_D}$, by using \eqref{eq: tau}, then we have
\begin{align}
&\|\Qm(t_j)\|^2 - \|\Qm(t_j+\Delta_{t_j})\|^2\nonumber\\
=&\|\Qm(t_j)-\Qm^*\|^2 - \|\Qm(t_j+\Delta_{t_j})-\Qm^*\|^2 \nonumber\\
&+ 2\Qm^{*\dagger}\left[\Qm(t_j)-\Qm(t_j+\Delta_j)\right]\nonumber\\
\le&\|\Qm(t_j)-\Qm^*\|^2 + 2\|\Qm^*\|\cdot\|\Qm(t_j)-\Qm(t_j+\Delta_j)\|\nonumber\\
\le&\left(D+\sqrt{B}\right)^2 + 2\|\Qm^*\|\left(D+\sqrt{B}\right)\nonumber\\
=&O\left(\frac{\left(D+\sqrt{B}\right)h^{\sf max}V}{\kappa}\right) \nonumber\\
=& O(m^2V),
\end{align}
where the last equality is due to $h^{\sf max}=O(m)$, $D=O(B)$, and
\begin{align}
B \le & 2\sum\limits_u {\left[ {{{\sum\limits_{v \in {\delta ^ - }\left( u \right)} {\left( {\frac{{{c_{vu}}}}{{\mathop {\min }\limits_{(\phi ,i)} r_{vu}^{\left( {\phi ,i} \right)}}}} \right)} }^2}\!\! +\!\! {{\sum\limits_{v \in {\delta ^ + }\left( u \right)} {\left( {\frac{{{c_{uv}}}}{{\mathop {\min }\limits_{(\phi ,i)} r_{uv}^{\left( {\phi ,i} \right)}}}} \right)} }^2}} \right]}\nonumber\\
\le &2m\mathop {\max }\limits_{(u,v) \in \mathcal E^a} \frac{{c_{uv}^2}}{{\mathop {\min }\limits_{(\phi ,i)} {{\left( {r_{uv}^{(\phi ,i)}} \right)}^2}}}=O(m).
\end{align}

Using  {\bf R1}  and {\bf R2}, and letting $V=1\left/\epsilon\right.$ and $\Delta_{t_j} \ge \lceil mV-1 \rceil$, it follows from \eqref{lyap00} that, for all $t_\ztr \ge  \tau_{\mathcal R_D}$,
\begin{align}
\label{lyap}
&\wv^\dagger  \left( \sum_{\tau=t_\ztr}^{t_\ztr +\Delta_{t_\ztr}}  \frac{\yv(\tau) }{\Delta_{t_\ztr}+1} \right ) -  h^{\sf opt}\nonumber\\
\le&\frac{B}{2V} + \frac{{{h^{\sf max}}\left( {D + \sqrt B } \right)}}{{\kappa \left( {{\Delta _j} + 1} \right)}} + \frac{{3\kappa {{\left( {D + \sqrt B } \right)}^2} + B\left( {D + \sqrt B } \right)}}{{2\kappa V\left( {{\Delta _j} + 1} \right)}}\nonumber\\
=&O\left(\frac{B}{2V} + \frac{{{h^{\sf max}}\left( {D + \sqrt B } \right)}}{{m\kappa V}}\right)=O\left(\frac{m}{V}\right)=O(m\epsilon).
\end{align}

Observing that $  \bar y_{uv} = \sum_{\tau=t_\ztr}^{t_\ztr+\Delta_{t_\ztr}} \frac{y_{uv}(\tau) }{\Delta_{t_\ztr}+1}$,
Eq. \eqref{eq:thm1} in Theorem \ref{thm1}  follows immediately.
In order to prove Eq. \eqref{eq:thm2},  starting from  \eqref{eq:dinamicvect}, we get
 \begin{eqnarray}
 \Qm(t+1)  
 & \geq & \Qm(t) + \Am \fv (t) ,
\end{eqnarray}
from which it follows that, for all $t _\ztr \!\ge\! \tau_{\mathcal R_D}$,
 \begin{eqnarray}
 \sum_{\tau=t_\ztr}^{t_\ztr+\Delta_{t_\ztr}}   \frac{ \Am \fv (t) }{\Delta_{t_\ztr}+1}   &\leq &
  \frac{\Qm(t_\ztr+\Delta_{t_\ztr}) - \Qm(t_\ztr) }{\Delta_{t_\ztr}+1}.
   \label{lyap112}
   \end{eqnarray}
Finally, the  $\{u,(d,\phi,i)\}$-th element of the vectors in  \eqref{lyap112} satisfies
\begin{align}
\label{flow_conserv}
&\ \ \ \  \sum_{v\in\delta^{\!-\!}(u)} \!   \bar f_{vu}^{(d,\phi,i)} - \! \sum_{v\in\delta^{\!+\!}(u)} \!  \bar  f_{uv}^{(d,\phi,i)}     \nonumber \\
&\leq\frac{\left|Q_u^{(d,\phi,i)}(t_\ztr+\Delta_{t_\ztr}) -Q_u^* \right| + \left| Q_u^{(d,\phi,i)}(t_\ztr)-Q_u^*\right| }{\Delta_{t_\ztr}+1}\nonumber\\
&\le \frac{2 (D + \sqrt{B}) }{mV} = O\left(\frac{1}{V}\right)= O(\epsilon),
\end{align}
where
$$ \bar f_{vu}^{(d,\phi,i)} = \sum_{\tau=t_{\ztr}}^{t_\ztr+\Delta_{t_\ztr}}     \frac{f_{vu}^{(d,\phi,i)}(\tau)} {\Delta_{t_\ztr}+1},
\quad    \bar f_{uv}^{(d,\phi,i)}  = \sum_{\tau=t_\ztr}^{t_\ztr+\Delta_{t_\ztr}}     \frac{f_{uv}^{(d,\phi,i)}(\tau)}{\Delta_{t_\ztr}+1}.$$

Note that $\tau_{\mathcal R_D} = O(mV)=O(m\left/\epsilon\right.)$ due to \eqref{tau_R_D}, and $\Delta_{t_j} \ge \left\lceil {mV-1} \right\rceil  =O(m\left/\epsilon\right.)$. Thus, from \eqref{lyap} and \eqref{flow_conserv}, $\forall t\ge T(\epsilon) \triangleq t_j + \Delta_{t_j}+1\ge \tau_{\mathcal R_D} + mV = O(m/\epsilon)$, the following relations hold:
\begin{align}
& \! \sum_{(u,v)\in\mathcal E^a} \! w_{uv} \, \bar y_{uv}  \leq  h^{ \sf opt} + O(m\epsilon) \label{eq:thm2_finally1}\\
& \! \sum_{v\in\delta^{\!-\!}(u)} \! \bar f_{vu}^{(d,\phi,i)} - \! \sum_{v\in\delta^{\!+\!}(u)} \!
\bar f_{uv}^{(d,\phi,i)} \leq O(\epsilon)  \qquad\,\,\, \forall u, d, \phi, i \label{eq:thm2_finally2}
\end{align}
Note that since QNSD chooses the $j$-th iteration frame as $\ztr=\lfloor \log_2 t \rfloor$, $t_j=2^\ztr$, and  $\Delta_{t_\ztr}=2^\ztr$,
there exists a $j^*$ such that $2^{j^*} \in \left[ \max\{\tau_{{\mathcal R}_D}, mV+1 \}, 2\max\{\tau_{{\mathcal R}_D}, mV+1 \} \right)$,
and hence Eq. \eqref{eq:thm2_finally1} and \eqref{eq:thm2_finally2} hold.


In addition, since $f_{uv}^{(d,\phi,i)}(t)$ and $y_{uv}(t)$ satisfy \eqref{chain}-\eqref{vars} in each iteration of QNSD, so does the final solution $\{\bar f_{uv}^{(d,\phi,i)}, \bar y_{uv}\}$, which concludes the proof of Theorem \ref{thm1}.

\hfill $\blacksquare$


\section*{Acknowledgements}
This work was supported in part by the US National Science Foundation.

\end{document}